\newtheorem{theorem}{Theorem}[section]
\newtheorem{lemma}{Lemma}[section]
\newtheorem{definition}{Definition}[section]
\newtheorem{proposition}{Proposition}[section]
\newtheorem{example}{Example}[section]
\newtheorem{remark}{Remark}[section]
\begin{document}
\title{Constructing Sum-Rank Metric Codes from Quadratic Galois Extensions of Function Fields}
\author{Yunlong Zhu\thanks{Y. Zhu is with the Department of Mathematics, School of Mathematics, Sun Yat-sen University, Guangzhou 510275, China (e-mail: zhuylong3@mail2.sysu.edu.cn).},
Chang-An Zhao$^*$\thanks{
C.-A. Zhao is with the School of Mathematics, Sun Yat-sen University, Guangzhou 510275, China, and also with the Guangdong Key Laboratory of Information Security Technology, Guangzhou 510006, China (e-mail: zhaochan3@mail.sysu.edu.cn). \\$*${Corresponding author.}}

}
\date{\today}
\maketitle

\begin{abstract}
This paper introduces new constructions of sum-rank metric codes derived from algebraic function fields, as existing results on such codes remain limited. A major challenge lies in the determination of their parameters. We address this issue by employing quadratic Galois extensions, proposing two general constructions of $2\times2$ sum-rank codes. Analogous to algebraic geometry codes in the Hamming metric, our codes achieve a larger block length compared to existing constructions. We determine explicit parameters including dimensions and minimum distances of our codes, and we present an illustrative example using elliptic function fields. Finally, we discuss the asymptotic behavior of our codes and compare them with the Gilbert-Varshamov-like bound for sum-rank metric codes.

{\bf Index terms:} Sum-rank metric codes, algebraic geometry codes, function fields, elliptic curves.
\end{abstract}

\section{Introduction}
Algebraic function fields over finite fields provide a fundamental framework for modern error-correcting codes in the Hamming metric \cite{Goppa}, as the function field perspective enables a deeper analysis of code parameters, including minimum distance and duality properties \cite{Stich}. Specifically, a linear $[n,k]$ Hamming metric code $C$ over $\mathbb{F}_q$ is a $k$-dimensional subspace of $\mathbb{F}_q^n$ with minimum Hamming distance
\[
	d_H:=\min_{c_1\neq c_2}\{wt_H(c_1-c_2)\mid c_1,c_2\in C\},
\]
where $wt_H$ denotes the Hamming weight, counting non-zero coordinate positions. Classical algebraic geometry (AG) codes are constructed by evaluating functions from Riemann-Roch spaces at rational places, enabling explicit determination of code parameters.

For a linear $[n,k,d_H]$ code $C$, the Singleton bound states that
\[
	d_H\leq n-k+1.
\]
When equality holds, the code is called a maximum distance separable (MDS) code. Reed-Solomon (RS) codes \cite{Huffman}, derived from rational function fields $\mathbb{F}_q(x)/\mathbb{F}_q$ with $x$ transcendental over $\mathbb{F}_q$, represent prominent examples of Hamming metric MDS codes. Their maximum length of $q+1$ over $\mathbb{F}_q$ aligns with the main conjecture of MDS codes.

Recently, sum-rank metric codes have garnered significant attention due to their applications in multishot network coding \cite{Liu-multishot,Penas-multishot,Napp-multishot,Nobrega-multishot}, space-time coding \cite{Shehadeh}, and distributed storage \cite{Penas-skew,Penas-mr}. For further details and applications, we refer the reader to \cite{Penas-book}. The basic definitions are provided as follows.
\begin{definition}
Given $s\ge1$ and positive integers $n_1,\ldots,n_s,m_1,\ldots,m_s$ with $n_i\le m_i$ and $m_1\ge m_2\cdots\ge m_s$, define
\[
	\Pi :=\Pi_q(n_1\times m_1|\cdots|n_s\times m_s)=\bigoplus_{i=1}^s\mathbb{F}_q^{n_i\times m_i}
\]
where $\mathbb{F}_q^{n_i\times m_i}$ denotes the space of $n_i\times m_i$ matrices over $\mathbb{F}_q$. Note that $\Pi$ is an $\mathbb{F}_q$-linear space of dimension $\sum_{i=1}^sn_im_i$. The \textbf{sum-rank weight} of
\[
	X:=(X_1,\ldots,X_s)\in\Pi
\]
is defined as
\[
	wt_{sr}(X) :=\sum\limits_{i=1}^s{\rm rank}(X_i).
\]
The \textbf{sum-rank distance} between $X$ and $Y$ in $\Pi$ is
\[
	d_{sr}(X,Y)=wt_{sr}(X-Y).
\]
\end{definition}
The sum-rank weight and distance induce a metric on $\Pi$. We present the definition as follows.
\begin{definition}
A linear \textbf{sum-rank metric code} $C_{sr}$ is an $\mathbb{F}_q$-linear subspace of $\Pi$. The \textbf{minimum sum-rank distance} of $C_{sr}$ is defined as
\[
	d_{sr}(C):=\min_{X\neq Y}\{wt_{sr}(X-Y)|X,Y\in C_{sr}\}.
\]
\end{definition}
Sum-rank metric codes generalize Hamming metric codes of length $s$ when $n_i=m_i=1$ for all $1\le i\le s$. When $s=1$, they reduce to \textbf{rank metric} codes, which were first introduced by Delsarte \cite{Delsarte}. 

Fundamental properties and bounds for sum-rank metric codes are examined in \cite{Abiad,Abida-bound,Byrne,Ott-bound}. The Singleton bound was established in \cite{Byrne} and \cite{Penas-mr}. In the general case, let $j$ and $\delta$ be the unique integers satisfying $d-1=\sum_{i=1}^{j-1}n_i+\delta$ with $0\le\delta\le n_j-1$. Then
\[
	|C|\le q^{\sum_{i=j}^sm_in_i-m_j\delta}.
\]
A code $C\subseteq\Pi$ that attains this bound is called a \textbf{maximum sum-rank distance (MSRD)} code. For the special case where $n_1=\cdots=n_s=n$ and $m_1=\cdots=m_s=m$, this bound takes the form
\[
	|C|\le q^{m(sn-d_{sr}+1)}.
\]
Throughout this paper, we concentrate on sum-rank metric codes with these uniform parameters.

Let $N:=sn$. For fixed positive real numbers $\mathcal{R}_{sr},\delta_{sr}\in(0,1)$ satisfying
\begin{align*}
\mathcal{R}_{sr}<&\delta_{sr}^2-\delta_{sr}(2+\frac{2}{N})+1+\frac{2}{N}+\frac{1}{N^2}\\
				&-\frac{\sum_{i=1}^{d_{sr}N-1}\log_q(1+\frac{s-1}{i})+\log_q(\delta_{sr}N-1)}{Nm}-\frac{\log_q(\gamma_q)}{nm},
\end{align*}
where $\gamma_q=\prod_{i=1}^{\infty}(1-q^{-i})^{-1}$, there exists a sum-rank code with rate at least $\mathcal{R}_{sr}$ and relative minimum sum-rank distance at least $\delta_{sr}$. When $m=\xi n$ goes to the infinity and $m\in\omega(\log_q(s))$, where $\xi$ is a constant, we have
\[
	\mathcal{R}_{sr}\sim \delta_{sr}^2-\delta_{sr}(1+\frac{1}{\xi})+1.
\]
Recent work \cite{Ott-bound} demonstrates that random linear sum-rank metric codes attain the asymptotic Gilbert-Varshamov-like (GV-like) bound with high probability. Consider the case where $n=m$ and the number of blocks $s$ goes to infinity. Then the previous bound is asymptotically provided by \cite{Berardini}:
\begin{align*}
\mathcal{R}_{sr}<&(\delta_{sr}-1)^2-\frac{\delta_{sr}}{n}\log_q(1+\frac{1}{\delta_{sr}n})-\frac{\log_q(1+\delta_{sr}n)}{n^2}\\
				&-\frac{\log_q(\gamma_q)}{n^2}+o(1).
\end{align*}

Constructions of ``good" sum-rank metric codes are available in several works \cite{Abiad,Byrne-anticode,Moreno,Chen-explicit,Penas-skew,Penas-bch,Penas-pmds,Penas-dtextend,Penas-msrd, Neri-twisted,Neri-oneweight}. A standard technique involves constructing $C$ as a classical Hamming metric code of length $sn$ over $\mathbb{F}_{q^m}$ and then applies rank metric code constructions to each $n$-length block. This ensures that the dimension $\kappa$ of $C$ over $\mathbb{F}_{q^m}$ satisfies $d_{sr}+\kappa\le sn+1$, matching the Singleton bound for the Hamming metric. Moreover, algorithms over $\mathbb{F}_{q^m}$ can be employed for decoding sum-rank metric codes, see \cite{Bartz,Chen-decoding,Horman,Puchinger}.

One of the most prominent rank metric codes is the Gabidulin code, which is constructed using linearized polynomials over $\mathbb{F}_{q^m}$, see \cite{Gabidulin,Guo,Islam,Li,Neri-Gabidulin}. Among sum-rank metric codes, linearized Reed-Solomon codes \cite{Penas-skew} represent the most extensively investigated family to date, see \cite{Caruso,Horman-skew,Jerkovits,Neri-twisted,Punchinger-bound}. Gabidulin codes and linearized RS codes are \textbf{maximum rank distance (MRD)} and MSRD respectively, serving as generalizations of RS codes.

\subsection*{Motivations}
While non-trivial algebraic geometric constructions for sum-rank metric codes exist, see \cite{Chen-explicit,Penas-pmds}, most results are constructed using towers of function fields as illustrated below. In \cite{Berardini}, E. Berardini and X. Caruso established the first geometric construction of sum-rank metric codes. Their approach employs the Ore polynomial ring \cite{Ore} over an extension field derived from the completion of $F$ at specified places, and subsequently develops the theory of Riemann-Roch spaces over this ring. This framework allows them to define linearized algebraic geometry codes and present explicit constructions of sum-rank metric codes with parameters consistent with classical algebraic geometry codes.

A principal motivation for constructing codes via algebraic function fields lies in the availability of numerous evaluation places bounded by the Hasse-Weil bound for higher genus. Consequently, codes with long lengths can be constructed over small finite fields. Classical algebraic geometry codes are known to asymptotically surpass the Gilbert-Varshamov (GV) bound \cite{TVZ-bound}. This naturally raises the question: whether sum-rank codes derived from algebraic geometry also possess parameters surpassing the GV bound.

\begin{figure}\label{gc}
	\centering
	\tikzstyle{process} = [rectangle, minimum width=1cm, minimum height=0.5cm, text centered]
	\begin{tikzpicture}[node distance=2.5cm]
		\node[process] (curve) {$F/\mathbb{F}_{q^m}$};
		\node[process, below of=curve] (constant extension) {$\mathbb{F}_{q^m}(x)/\mathbb{F}_{q^m}$};
		\node[process, below of=constant extension] (base field) {$\mathbb{F}_{q}(x)/\mathbb{F}_{q}$};
		\node[process, right of =curve, xshift=2cm] (Places) {$P_1,\ldots,P_n,\ldots,P_{sn},P_i\in F/\mathbb{F}_{q^m}$};
		\node[process, below of =Places] (vectors) {${\bf c_1,\ldots,c_s,c_i}\in\mathbb{F}_{q^m}^n$};
		\node[process, below of =vectors] (matrices) {$X_1,\ldots,X_s,X_i\in\Pi$};
		
		\draw[->] (curve) --   (constant extension);
		\draw[->] (constant extension) --  (base field);
		\draw[->] (Places) -- node[right]{Hamming metric}   (vectors);
		\draw[->] (vectors) --node[right]{rank metric}  (matrices);
	\end{tikzpicture}
	\caption{Tower of function field extensions.}
\end{figure}
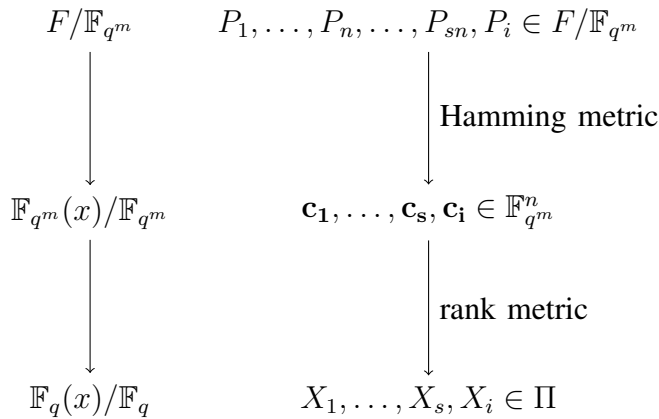

\subsection*{Our contributions}	
In contrast to the construction in \cite{Berardini}, we propose an approach for constructing sum-rank metric codes using function fields. The central element of our construction is the {\it local integral basis} at a place in function field extensions. Specifically, let $\mathcal{O}_P$ be the valuation ring at a place $P$ of $F/\mathbb{F}_q$, and let $F'/F$ be a finite separable function field extension. The integral closure $\mathcal{O}'_P$ of $\mathcal{O}_P$ in $F'$ constitutes a vector space over $\mathcal{O}_P$. Consequently, a function in $F'/\mathbb{F}'_q$ that is regular at the places lying over $P$ corresponds to a matrix over $F$, where $\mathbb{F}'_q$ is a finite extension of $\mathbb{F}_q$. When $F'/F$ is Galois, we define a map on $\mathcal{O}'_P$, via the direct sum of Riemann-Roch spaces. This yields sum-rank metric codes through the evaluation of the function matrix at rational places. Although our construction shares some similarities with \cite{Berardini}, it significantly differs by focusing on the intrinsic structure of function field extensions. By decomposing the Riemann-Roch spaces as a direct sum of subspaces, we explicitly determine the parameters of our sum-rank codes. Furthermore, when $q$ is a special square, we analyze the asymptotic behavior of our codes and compare them with the GV-like bound.

\subsection*{Organization}
The remainder of this paper is organized as follows. Section II introduces the fundamental concepts required for our purpose. In Section III, we develop a general construction of sum-rank metric codes via function field extensions. Section IV provides the explicit sum-rank metric codes using quadratic Galois extensions and presents a new specialized construction, including an instance involving elliptic function fields and the comparison with the GV-like bound. Section V concludes this paper.

\section{Preliminaries and basic concepts}
This section reviews fundamental definitions and notations, primarily based on \cite{Stich}.
\subsection{Places and divisors}
Let $q$ be a prime power. An algebraic function field $F/\mathbb{F}_q$ is a finite extension of $\mathbb{F}_q(x)$ with genus $g(F)$, where $x$ is transcendental over $\mathbb{F}_q$. A place $P$ is the maximal ideal of a valuation ring $\mathcal{O}_P$ in $F/\mathbb{F}_q$. Denote by $\hat{F}_P:=\mathcal{O}_P/P$ the residue field of $P$, which is a finite extension of $\mathbb{F}_q$. The degree of $P$ is defined as
\[
	\deg(P)=[\hat{F}_P:\mathbb{F}_q].
\]
If $\deg(P)=1$, then $P$ is called rational. The prime element (or uniformizing parameter) of $P$ is a function $t\in P$ such that $P=t\mathcal{O}_P$. Any non-zero element $z\in F$ admits a unique representation $z=t^nu$ with $u\in\mathcal{O}^*_P$. The discrete valuation associated with $P$ is 
\[
	v_P:F\to\mathbb{Z}\cup\{\infty\}
\]
, satisfying $v_P(z)=n$ and $v_P(0)=\infty$. The strict triangle inequality is characterized as follows:
\begin{lemma}
For any discrete valuation $v_P$ with associated place $P$,
\[
	v_P(x+y)\ge\min\{v_P(x),v_P(y)\}
\]	
holds for all $x,y\in F$. Moreover, if $v_P(x)\neq v_P(y)$, then $v_P(x+y)=\min\{v_P(x),v_P(y)\}$.
\end{lemma}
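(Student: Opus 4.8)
The plan is to extract everything from the unique representation $z = t^{n}u$ with $u\in\mathcal{O}_P^{*}$ recalled immediately above the statement. That representation gives three elementary facts for free: multiplicativity $v_P(z_1z_2)=v_P(z_1)+v_P(z_2)$; that every unit $u\in\mathcal{O}_P^{*}$ has $v_P(u)=0$ (in particular $v_P(-y)=v_P(y)$); and the identification $\mathcal{O}_P=\{z\in F: v_P(z)\ge 0\}$. Before invoking these I would dispose of the degenerate cases: if $x=0$ or $y=0$ the inequality is immediate from the convention $v_P(0)=\infty$, and if $x+y=0$ then $v_P(x+y)=\infty$ and nothing is to be proved. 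So assume $x$, $y$, and $x+y$ are all nonzero, which legitimizes the divisions used below.

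For the inequality, assume without loss of generality $v_P(x)\le v_P(y)$ and set $z:=y/x\in F$. Then $v_P(z)=v_P(y)-v_P(x)\ge 0$, so $z\in\mathcal{O}_P$, and since $\mathcal{O}_P$ is a ring, $1+z\in\mathcal{O}_P$, hence $v_P(1+z)\ge 0$. By multiplicativity,
\[
	v_P(x+y)=v_P\bigl(x(1+z)\bigr)=v_P(x)+v_P(1+z)\ge v_P(x)=\min\{v_P(x),v_P(y)\},
\]
which is the first assertion. For the ``moreover'' part, suppose $v_P(x)\ne v_P(y)$, say $v_P(x)<v_P(y)$, and argue by contradiction: assume the inequality is strict, i.e. $v_P(x+y)>v_P(x)$. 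Writing $x=(x+y)+(-y)$ and applying the inequality already proved to the two summands $x+y$ and $-y$ (with $v_P(-y)=v_P(y)$), we obtain
\[
	v_P(x)=v_P\bigl((x+y)+(-y)\bigr)\ge\min\{v_P(x+y),v_P(y)\}>v_P(x),
\]
a contradiction. Hence $v_P(x+y)=v_P(x)=\min\{v_P(x),v_P(y)\}$.

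There is no substantive obstacle here: the argument is entirely routine. The only points requiring care are recording the three auxiliary facts about $v_P$ that follow from the unique-representation statement, and separating out the cases $x=0$, $y=0$, and $x+y=0$ so that forming $y/x$ and using the identity $x=(x+y)-y$ are both legitimate; everything else is a one-line computation.
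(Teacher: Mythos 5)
Your proof is correct. Note that the paper itself states this lemma without proof — it is the standard strict triangle inequality for discrete valuations, taken as a known fact from Stichtenoth's book — and your argument (reducing the inequality to $v_P(1+y/x)\ge 0$ via the valuation ring, then handling the equality case through the decomposition $x=(x+y)+(-y)$, with the degenerate cases $x=0$, $y=0$, $x+y=0$ set aside) is precisely the standard textbook derivation, so there is nothing in the paper to diverge from.
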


A divisor $G$ is defined as a formal sum of places
\[
	G:=\sum\limits_Pn_{P}P
\]
with $n_{P}\in\mathbb{Z}$ and $\deg(G)=\sum\limits_Pn_{P}\deg(P)$. The support of $G$ is defined as
\[
	\text{supp}(G):=\{P\mid n_{P}\neq0\}. 
\]
For any function $z\in F$,
\[
	(z):=\sum\limits_Pv_P(z)P
\]
is called the principal divisor of $z$. Given two divisors $G_1$ and $G_2$, we write $G_1\le G_2$ if $n_P(G_1)\le n_P(G_2)$ for all places $P$. The Riemann-Roch space associated with $G$ is defined as
\[
	\mathcal{L}(G):=\{z\in F\mid(z)+G\ge0\}\cup\{0\}.
\]
This space is a vector space over $\mathbb{F}_q$ with dimension given by the Riemann-Roch theorem:
\begin{lemma}\cite[Theorem 1.4.17]{Stich}
Let $\ell(G)$ denote the dimension of $\mathcal{L}(G)$ over $\mathbb{F}_q$. Then
\[
	\ell(G)\ge\deg(G)-g(F)+1.
\]
Moreover, if $\deg(G)\ge 2g(F)-1$, then
\[
	\ell(G)=\deg(G)-g(F)+1.
\]
\end{lemma}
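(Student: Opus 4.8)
The displayed statement is the Riemann--Roch theorem, recorded here as \cite[Theorem 1.4.17]{Stich}; to prove it from scratch the natural route has two stages: Riemann's inequality, which already yields the first displayed bound, and a duality refinement, which yields equality once $\deg(G)\ge 2g(F)-1$. Throughout I would use only the elementary facts that $\ell(G)$ depends solely on the divisor class of $G$, that $\ell(0)=1$ (the only functions with no poles are the constants), that $\ell(G+P)\le\ell(G)+\deg(P)$ for a place $P$ (via the residue map $z\mapsto t^{\,v_P(G)+1}z\bmod P$ with local uniformizer $t$ and kernel exactly $\mathcal{L}(G)$, so the quotient embeds in $\mathcal{O}_P/P$), and that $\ell(G)=0$ whenever $\deg(G)<0$ (a nonzero $z\in\mathcal{L}(G)$ would force $0=\deg((z))\ge-\deg(G)>0$).

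Stage one: finiteness of the genus. Fix $x\in F\setminus\mathbb{F}_q$, put $n=[F:\mathbb{F}_q(x)]$ and let $B=(x)_\infty$, so $\deg(B)=n$. Choosing a basis $w_1,\dots,w_n$ of $F$ over $\mathbb{F}_q(x)$ and an effective divisor $C$ with $(w_i)+C\ge0$ for all $i$, the functions $\{\,w_ix^{j}:1\le i\le n,\ 0\le j\le m\,\}$ are $\mathbb{F}_q$-linearly independent elements of $\mathcal{L}(mB+C)$, whence $\ell(mB+C)\ge n(m+1)$ and the defect $\deg(mB+C)-\ell(mB+C)\le\deg(C)-n$ is bounded independently of $m$. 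Since this defect is monotone nondecreasing under $G\le G'$ (iterate the place-by-place inequality), and since for any divisor $G$ one can clear the finitely many places in the support of its positive part by a suitable rational function of $x$ to produce an $m$ and an effective divisor $G'\sim mB$ with $G\le G'$, it follows that $\deg(G)-\ell(G)$ is bounded above uniformly in $G$. Defining $g(F)$ as the least integer with $\ell(G)\ge\deg(G)+1-g(F)$ for all divisors $G$ gives the first displayed inequality immediately, and $g(F)\ge0$ by taking $G=0$.

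Stage two: duality. Set the index of speciality $i(G):=\ell(G)-\deg(G)-1+g(F)\ge0$, so that $i(0)=g(F)$. Identifying $i(G)$ with $\dim_{\mathbb{F}_q}\Omega(G)$, the dimension of the space of Weil differentials regular along $G$, and using that the $F$-module of Weil differentials is one-dimensional, a nonzero differential $\omega$ acquires a divisor $W:=(\omega)$ and division by $\omega$ gives an $\mathbb{F}_q$-isomorphism $\Omega(G)\xrightarrow{\ \sim\ }\mathcal{L}(W-G)$, that is, $i(G)=\ell(W-G)$. This is the Riemann--Roch identity
\[
\ell(G)-\ell(W-G)=\deg(G)+1-g(F),
\]
and evaluating it at $G=0$ forces $\ell(W)=g(F)$, then at $G=W$ forces $\deg(W)=2g(F)-2$. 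Consequently, if $\deg(G)\ge2g(F)-1$ then $\deg(W-G)=2g(F)-2-\deg(G)\le-1<0$, so $\ell(W-G)=0$ and the identity collapses to $\ell(G)=\deg(G)+1-g(F)$, which is the ``Moreover'' claim. The routine ingredients are the power-basis estimate and the place-by-place inequality for $\ell$; the genuine obstacle is Stage two, namely building the adele ring and the residue pairing, proving that the Weil-differential module has $F$-dimension one, and thereby establishing the duality $i(G)=\ell(W-G)$ together with $\deg(W)=2g(F)-2$. In the present paper this machinery is bypassed by citing \cite{Stich}.
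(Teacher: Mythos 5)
The paper gives no proof of this lemma --- it is quoted verbatim from \cite[Theorem 1.4.17]{Stich} --- and your outline follows exactly the standard development of that reference: Riemann's inequality via the uniformly bounded defect $\deg(G)-\ell(G)$ computed from the pole divisor of $x$, then duality to force $\ell(W-G)=0$ once $\deg(G)\ge 2g(F)-1$. Your sketch is correct as far as it goes; the one caveat, which you yourself flag, is that the duality facts $i(G)=\ell(W-G)$ and $\deg(W)=2g(F)-2$ (adeles, Weil differentials, one-dimensionality of the differential module) are asserted rather than proved, and that is precisely the content the paper delegates to the citation.
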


\subsection{Extensions of function fields}
An algebraic function field $F'/\mathbb{F}_{q^m}$ is called an algebraic extension of $F/\mathbb{F}_q$ if $F'/F$ is an algebraic field extension with $[F':F]<\infty$ and $m\ge1$. A place $P'$ of $F'$ is said lying over a place $P$ of $F$ if $P\subseteq P'$, denoted by $P'|P$. The \textbf{ramification index} $e(P'|P)$ is defined as the integer satisfying
\[
	v_{P'}(x)=e(P'|P)\cdot v_P(x)
\] for all $x\in F$, and the relative degree of $P'$ over $P$ is given by
\[
	f(P'|P)=[\mathcal{O}_{P'}/P':\mathcal{O}_{P}/P].
\]
For any place $P$ of $F$, there exist only finitely many places $P'$ of $F'$ such that $P'|P$. The conorm of $P$ is defined as
\[
	\text{Con}_{F'/F}(P):=\sum\limits_{P'|P}e(P'|P)\cdot P'
\]
and
\[
	\text{Con}_{F'/F}\left(\sum\limits_Pn_{P}P\right):=\sum n_P\cdot \text{Con}_{F'/F}(P).
\]
Consequently, for any non-zero function $z\in F$, the principal divisor of $z$ in $F'$ satisfies
\[
	(z)_{F'}=\text{Con}_{F'/F}((z)_F).
\]
The fundamental equality of places is stated as follows.
\begin{proposition}\cite[Theorem 3.3.11]{Stich}
Let $P$ be a place of $F$, and let $P_1,\ldots,P_n$ be all places of $F'$ lying over $P$. Denote $e_i=e(P_i|P)$ and $f_i=f(P_i|P)$ for $i=1,\ldots,n$. Then
\[
	\sum\limits_{i=1}^ne_if_i=[F':F].
\]
\end{proposition}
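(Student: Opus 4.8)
The plan is to reduce to the local situation at $P$ and carry out a dimension count over the residue field $\mathcal{O}_P/P$. Let $\mathcal{O}'$ be the integral closure of $\mathcal{O}_P$ in $F'$. Since a valuation ring of $F'$ contains $\mathcal{O}_P$ precisely when it equals $\mathcal{O}_{P_i}$ for some $i$, and each $\mathcal{O}_{P_i}$ is integrally closed, one has $\mathcal{O}' = \bigcap_{i=1}^{n} \mathcal{O}_{P_i}$. This $\mathcal{O}'$ is a Dedekind domain whose only maximal ideals are $P_i \cap \mathcal{O}'$ for $i = 1,\dots,n$ (hence a principal ideal domain), with quotient field $F'$ and residue field $\mathcal{O}'/(P_i\cap\mathcal{O}') \cong \mathcal{O}_{P_i}/P_i$; I will write $P_i$ also for $P_i \cap \mathcal{O}'$.

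First I would show that $\mathcal{O}'$ is a free $\mathcal{O}_P$-module of rank $[F':F]$. As $\mathcal{O}_P$ is a principal ideal domain, it suffices to show $\mathcal{O}'$ is finitely generated and torsion-free over $\mathcal{O}_P$; torsion-freeness is clear, and finite generation follows from $F'/F$ being separable: picking an $F$-basis $\{\omega_j\}$ of $F'$ inside $\mathcal{O}'$ (possible after scaling), the nondegenerate trace form $\mathrm{Tr}_{F'/F}$ gives a dual basis $\{\omega_j^*\}$ with $\sum_j \mathcal{O}_P\,\omega_j \subseteq \mathcal{O}' \subseteq \sum_j \mathcal{O}_P\,\omega_j^*$, squeezing $\mathcal{O}'$ between two $\mathcal{O}_P$-lattices of rank $[F':F]$. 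Tensoring an $\mathcal{O}_P$-basis of $\mathcal{O}'$ with $F$ yields an $F$-basis of $F'$, so the rank is exactly $[F':F]$. Hence, choosing a prime element $t$ of $P$, the quotient $\mathcal{O}'/t\mathcal{O}'$ is a vector space over $\mathcal{O}_P/P$ of dimension exactly $[F':F]$.

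Next I would compute $\dim_{\mathcal{O}_P/P}(\mathcal{O}'/t\mathcal{O}')$ a second way. Because $v_{P_i}(t) = e(P_i|P)\,v_P(t) = e_i$, the factorization of the ideal $t\mathcal{O}'$ in the Dedekind domain $\mathcal{O}'$ is $t\mathcal{O}' = \prod_{i=1}^{n} P_i^{e_i}$. The ideals $P_i^{e_i}$ are pairwise comaximal, so the Chinese Remainder Theorem gives $\mathcal{O}'/t\mathcal{O}' \cong \bigoplus_{i=1}^{n} \mathcal{O}'/P_i^{e_i}$. For each $i$, the chain $\mathcal{O}' \supseteq P_i \supseteq \cdots \supseteq P_i^{e_i}$ has successive quotients $P_i^{j}/P_i^{j+1}$ for $0 \le j < e_i$, each a one-dimensional vector space over $\mathcal{O}'/P_i = \mathcal{O}_{P_i}/P_i$ (multiplication by the $j$-th power of a uniformizer of $P_i$ gives the isomorphism after localizing at $P_i$); since $[\mathcal{O}_{P_i}/P_i : \mathcal{O}_P/P] = f_i$, this gives $\dim_{\mathcal{O}_P/P}(\mathcal{O}'/P_i^{e_i}) = e_i f_i$. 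Summing over $i$ and comparing with the previous paragraph yields $\sum_{i=1}^{n} e_i f_i = [F':F]$.

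The main obstacle I anticipate is the finite generation (equivalently, freeness) of $\mathcal{O}'$ over $\mathcal{O}_P$: the dual-basis argument genuinely uses that $F'/F$ is separable, which is the standing hypothesis throughout this paper, so no further work is needed here. For a general finite extension one would instead factor $F'/F$ into a separable extension followed by a purely inseparable one, check that the quantity $\sum e_i f_i$ is multiplicative in towers (using that ramification indices and residue degrees multiply), and handle a purely inseparable step of prime degree $p$ by a direct valuation computation showing there is a unique place above $P$ with $ef = p$. A secondary technical point to pin down is that $\mathcal{O}' = \bigcap_i \mathcal{O}_{P_i}$ is indeed semilocal with exactly the stated maximal ideals; this follows from the bijection between valuation rings of $F'$ dominating $\mathcal{O}_P$ and the places $P_i \mid P$.
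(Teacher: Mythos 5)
The paper does not prove this proposition at all: it is quoted verbatim from Stichtenoth's book (the "fundamental equality"), so there is no in-paper argument to compare against. Your proof is the standard module-theoretic one — compute $\dim_{\mathcal{O}_P/P}\bigl(\mathcal{O}'/t\mathcal{O}'\bigr)$ once via freeness of the integral closure $\mathcal{O}'=\bigcap_i\mathcal{O}_{P_i}$ as an $\mathcal{O}_P$-module of rank $[F':F]$, and once via $t\mathcal{O}'=\prod_i\mathfrak{p}_i^{e_i}$, CRT, and the filtration $\mathfrak{p}_i^{j}/\mathfrak{p}_i^{j+1}$ — and it is correct and complete at the expected level of detail; it is also essentially the counting underlying Stichtenoth's own proof, which he phrases with approximation-chosen elements rather than CRT. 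The supporting facts you invoke (integral closure equals the intersection of the $\mathcal{O}_{P_i}$, semilocal PID structure with localizations $\mathcal{O}'_{\mathfrak{p}_i}=\mathcal{O}_{P_i}$ and residue fields $\mathcal{O}_{P_i}/P_i$) are standard and correctly flagged. Two caveats. First, the proposition as stated (and in Stichtenoth) assumes no separability, whereas your trace-form argument for finite generation does; this is harmless for this paper, since every extension actually used is Galois (indeed quadratic Kummer), but strictly speaking you prove a special case of the cited statement. Second, your sketch for the general case is slightly glib at the purely inseparable prime-degree step: that is exactly where the defect of valued fields can appear, so "a direct valuation computation" is not quite enough — one either needs finiteness of the integral closure for function fields (Noether) or a termination argument in the reduction $y\mapsto y/t^{v_P(u)/p}$, $y\mapsto y-w$; since you explicitly set this aside as unnecessary here, it is a side remark rather than a gap in what you actually prove.
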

If $[F':F]=n$ and there exist exactly $n$ distinct places lying over $P$, then $P$ \textbf{splits completely} in $F'/F$. In this case, $e_i=f_i=1$ for all $i$. If $e(P'|P)>1$ for some $P'$, then $P$ is ramified in $F'/F$. Moreover, if there exists a place $P'$ with $e(P'|P)=n$, then $P$ is \textbf{totally ramified} in $F'/F$.

\subsection{Galois Extensions}
An extension $F'/\mathbb{F}_{q^m}$ of a function field $F/\mathbb{F}_q$ is Galois if $F'/F$ is a finite Galois extension. Specifically, the automorphism group
\[
	\text{Aut}(F'/F)=\{\sigma:F'\to F'\mid\text{$\sigma$ is an isomorphism with $\sigma(a)=a$ for all $a\in F$}\}
\]
has order $[F':F]$. This group is called the $\textbf{Galois group}$ and write $\text{Gal}(F'/F):=\text{Aut}(F'/F)$.

For any place $P$ of $F/\mathbb{F}_q$, the Galois group $\text{Gal}(F'/F)$ acts transitively on the set of places
\[
	\{P'\mid\text{$P'$ lies over $P$}\}
\]
via $\sigma(P')=\{\sigma(x)\mid x\in P'\}$. For any place $P'$ of $F'/\mathbb{F}_{q^m}$, the valuation corresponding to the place $\sigma(P')$ is given by
\[
	v_{\sigma(P')}(y)=v_{P'}(\sigma^{-1}(y)),\quad y\in F'.
\]

\section{A General Setting of Sum-Rank Metric Codes}
In this section, we introduce a general construction of sum-rank metric codes using function field extensions, which differs from the approach proposed by E. Berardini and X. Caruso in \cite{Berardini}.
\subsection{The Main Idea of Matrices over $\mathbb{F}_q$}
Let $F/\mathbb{F}_q$ be a function field, and $F'/\mathbb{F}_{q^m}$ be a Galois extension of $F/\mathbb{F}_q$ with $[F':F]=n$. Suppose that $P$ is a place of $F$, and $t\in F$ is a prime element of $P$. The integral closure $\mathcal{O}'_P$ of $\mathcal{O}_P$ in $F'$ is given by
\[
	\mathcal{O}'_P = \bigcap\limits_{i=1}^n\mathcal{O}_{P_i}.
\]
An integral basis $\{z_1,\ldots,z_n\}$ of $\mathcal{O}'_P$ over $\mathcal{O}_P$ is a basis of $F'/F$ satisfying
\[
	\mathcal{O}'_P=\sum\limits_{i=1}^n\mathcal{O}_Pz_i.
\]
The quotient $V:=\mathcal{O}'_P/t\mathcal{O}'_P$ forms a vector space over the finite field $\hat{F}_P$ with basis
\[
	\{z_1+t\mathcal{O}'_P,\ldots,z_n+t\mathcal{O}'_P\}.
\]
For any $\sigma\in\text{Gal}(F'/F)$ and a place $P'$ with $P'|P$, observe that for any $z\in\mathcal{O}'_P$,
\[
	v_{P'}(\sigma(z))=v_{\sigma^{-1}(P')}(z)\ge0
\]
since $\sigma^{-1}(P')|P$. This implies $\sigma(z)\in\mathcal{O}'_P$, and hence $\sigma(V)=V$. The set
\[
	\{\sigma(z_1)+t\mathcal{O}'_P,\ldots,\sigma(z_n)+t\mathcal{O}'_P\}
\]
forms another basis of $V$ over $\hat{F}_P$. Then $\sigma$ induces an $n\times n$ matrix $\bar{A}_{\sigma}$ over $\hat{F}_P$ with respect to the basis $\{z_i\}$.

Similarly, for any $f\in\mathcal{O}'_P$, consider the multiplication map $\mu_f:\mathcal{O}'_P\to\mathcal{O}'_P$ defined by $\mu_f(z)=fz$. This induces an $\hat{F}_P$-linear map $\bar{\mu}_f:V\to V$ via
\[
	\bar{\mu}_f(z+t\mathcal{O}'_P):=fz+t\mathcal{O}'_P.
\]
This also induces an $n\times n$ matrix $\bar{B}_f$ with entries in $\hat{F}_P$.

\begin{remark}
The matrix $A=(a_{ij})$ associated with $\sigma$ has entries in $\mathcal{O}_P$ due to the integral basis property. Under the reduction map
\[
	\pi:\mathcal{O}_P\to \hat{F}_P,
\]
we obtain $\bar{A}=(\pi(a_{ij}))$. This justifies the notation $\bar{A}_{\sigma}$ and $\bar{B}_f$.
\end{remark}

When $P$ is rational (i.e., $\hat{F}_P=\mathbb{F}_q$), both $\sigma\in\text{Gal}(F'/F)$ and $f\in\mathcal{O}'_P$ induce matrices over $\mathbb{F}_q$. Let $\text{Gal}(F'/F)=\{\sigma_1,\ldots,\sigma_n\}$, and consider functions $f_1,\ldots,f_n\in\mathcal{O}'_P$. We define a linear operator
\begin{align*}
L : \mathcal{O}'_P&\to \mathcal{O}'_P\\
	 g &\mapsto\sum\limits_{i=1}^nf_i\sigma_i(g).
\end{align*}
For any two functions $g_1,g_2\in F$, we have
\[
	L(g_1+g_2)=\sum\limits_{i=1}^nf_i\sigma_i(g_1+g_2)=\sum\limits_{i=1}^nf_i(g_1+g_2)=L(g_1)+L(g_2).
\]
Thus, this $F'$-linear operator naturally induces a linear morphism on $F$. We further define a map $\varepsilon_P$ such that
\[
	\varepsilon_P(L)=\sum\limits_{i=1}^nA_iB_i=(M_{ij})
\]
where $B_i$ and $A_i$ are matrices representing $\mu_{f_i}$ and $\sigma_i$ for $1\le i\le n$, respectively. Then $\varepsilon_P(L)$ is an $n\times n$ matrix over $\mathcal{O}_P$. Moreover, let
\[
	\bar{\varepsilon}_P(L)=\sum\limits_{i=1}^n\bar{A_i}\bar{B_i}=(\pi(M_{ij}))
\]
be the matrix over $\mathbb{F}_q$ induced by $\varepsilon_P(L)$. We now state the following lemma.
\begin{lemma}\label{detmatrix}
\begin{itemize}
\item[(i)] Let $\{z_1,\ldots,z_n\}$ be an integral basis of $\mathcal{O}'_P$ over $\mathcal{O}_P$. Then the matrix representing $L$ with respect to this basis is $\varepsilon_P(L)$.
\item[(ii)] The determinant $\det(\varepsilon_P(L))$ belongs to $\mathcal{O}_P$.
\item[(iii)] The kernel ${\rm ker}(L)\neq\{0\}$ if and only if $\det(\varepsilon_P(L))=0$.
\end{itemize}
\end{lemma}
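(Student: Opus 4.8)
The three assertions are proved in the stated order; part~(i) carries the computational content and parts~(ii)--(iii) follow from it essentially formally.

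For (i), the first point I would record is that, although each $\sigma_i$ is only $\mathbb{F}_{q^m}$-semilinear, the operator $L$ is genuinely $F$-linear: for $a\in F$ we have $\sigma_i(a)=a$ because $\sigma_i$ fixes $F$ pointwise, so $L(ag)=\sum_i f_i\sigma_i(a)\sigma_i(g)=a\sum_i f_i\sigma_i(g)=aL(g)$. Combined with $\sigma_i(\mathcal{O}'_P)=\mathcal{O}'_P$ (already noted in the text) and $f_i\in\mathcal{O}'_P$, and the fact that $\mathcal{O}'_P$ is a ring, this shows that $L$ restricts to an $\mathcal{O}_P$-linear endomorphism of $\mathcal{O}'_P$ and extends to an $F$-linear endomorphism of $F'$. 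Now write $L=\sum_{i=1}^n\mu_{f_i}\circ\sigma_i$. Passing from an operator to its matrix with respect to the fixed basis $\{z_1,\dots,z_n\}$ is additive and turns a composite into the product of the corresponding matrices (in the order dictated by the paper's matrix convention), so the matrix of $L$ equals $\sum_{i=1}^nA_iB_i=\varepsilon_P(L)$, where $A_i$ represents $\sigma_i$ and $B_i$ represents $\mu_{f_i}$; that is precisely the claim.

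For (ii), I would exploit that $\{z_1,\dots,z_n\}$ is an \emph{integral} basis, i.e.\ $\mathcal{O}'_P=\bigoplus_{k=1}^n\mathcal{O}_Pz_k$. Since $\sigma_i(z_j)\in\mathcal{O}'_P$, expanding it in this basis gives coefficients in $\mathcal{O}_P$, so $A_i\in\mathcal{O}_P^{n\times n}$ (this is the content of the preceding Remark); similarly $f_iz_j\in\mathcal{O}'_P$ because $f_i\in\mathcal{O}'_P$, so $B_i\in\mathcal{O}_P^{n\times n}$. Hence $\varepsilon_P(L)=\sum_iA_iB_i$ has all entries in the ring $\mathcal{O}_P$, and its determinant, being a polynomial with integer coefficients in those entries, lies in $\mathcal{O}_P$.

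For (iii), by (i) the matrix $\varepsilon_P(L)\in\mathcal{O}_P^{n\times n}$ represents the $F$-linear endomorphism $L$ of the $n$-dimensional $F$-vector space $F'$, so standard linear algebra over the field $F$ gives: $L$ is not injective on $F'$ if and only if $\det(\varepsilon_P(L))=0$ in $F$; since $\det(\varepsilon_P(L))\in\mathcal{O}_P\subset F$ by (ii), this is equivalent to $\det(\varepsilon_P(L))=0$ in $\mathcal{O}_P$. It then remains to match ``$\ker(L)\neq\{0\}$'' read inside $\mathcal{O}'_P$ with non-injectivity of $L$ on $F'$: one direction is the inclusion $\mathcal{O}'_P\subset F'$, and for the other, given $0\neq g\in F'$ with $L(g)=0$, multiplying by a sufficiently large power $t^{N}$ of the prime element of $P$ produces $0\neq t^{N}g\in\mathcal{O}'_P$ with $L(t^{N}g)=t^{N}L(g)=0$ by $F$-linearity. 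The only genuinely delicate point I anticipate is the bookkeeping in (i): one must honor the convention (row versus column action, order of composition) under which the matrix of $\mu_{f_i}\circ\sigma_i$ is $A_iB_i$ rather than $B_iA_i$, and keep in mind that the $\sigma_i$-factor contributes a full matrix $A_i$ precisely because $L$ is only $F$-linear, not $F'$-linear; once this is set up correctly, (ii) and (iii) are routine.
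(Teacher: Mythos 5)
Your proof is correct and follows essentially the same route as the paper's: write $L=\sum_i\mu_{f_i}\circ\sigma_i$, represent it in the fixed integral basis by composing the matrices $A_i$ and $B_i$ to get $\varepsilon_P(L)$, observe the entries lie in $\mathcal{O}_P$ by integrality, and reduce (iii) to the vanishing of $\det(\varepsilon_P(L))$ by linear algebra. Your added care about the row/column and composition-order convention, and about matching $\ker(L)\neq\{0\}$ inside $\mathcal{O}'_P$ with non-injectivity on $F'$ by clearing denominators with $t^N$, only tightens points the paper leaves implicit.
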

\begin{proof}
(i) Let $g=g_1z_1+\cdots+g_nz_n\in\mathcal{O}'_P$. For any $\sigma\in\text{Gal}(F'/F)$ and $f\in\mathcal{O}_P$, let $A_{\sigma}$ and $B_f$ denote the matrices defined above. Then
\[
	\mu_f(g)=[g_1,\ldots,g_n]B_f[z_1,\ldots,z_n]^T
\]
and
\[
	\sigma(g)=[g_1,\ldots,g_n]A_{\sigma}[z_1,\ldots,z_n]^T.
\]
Thus $f\sigma(g)=[g_1,\ldots,g_n]B_fA_{\sigma}[z_1,\ldots,z_n]^T$. The result follows from the linearity of the maps $\mu_f$ and $\sigma$.\\
(ii) The result immediately follows since all coefficients of $A_{\sigma_i}$ and $B_{f_i}$ are contained in $\mathcal{O}_P$.\\
(iii) Consider the equation
\[
	\varepsilon_P(L)^T[x_1,\ldots,x_n]^T=[0,\ldots,0].
\]
Any solution $[g_1,\ldots,g_n]$ satisfies $L(g_1z_1+\cdots+g_nz_n)=0$. Therefore ${\rm ker}(L)$ contains nonzero elements if and only if the system has nontrivial solutions, which occurs precisely when $\det(\varepsilon_P(L))=0$.
\end{proof}
\subsection{Sum-rank Codes in the General Case}
We now present the construction of sum-rank metric codes. Let $Q_1,\ldots,Q_s$ be distinct rational places of $F$, and $Q_0$ be another place of $F$. Assume that $P_{0,1},\ldots,P_{0,m}$ are all places of $F'$ lying over $Q_0$.Let $G=nQ_{0}$, $D=Q_1+\cdots+Q_s$, and $G_j$ are $n$ divisor of $F'$ with
\[
	\text{supp}(G_j)\subset\{P_{0,1},\ldots,P_{0,m}\}.
\]
Define the evaluation map:
\begin{align*}
	\text{ev}_{D}: \mathcal{L}(G_1)\oplus\cdots\oplus\mathcal{L}(G_n)&\to\Pi=\bigoplus_{i=1}^s\mathbb{F}_q^{n\times n}\\
		(f_1,\ldots,f_n) &\mapsto (\bar{\varepsilon}_{Q_1}(L),\ldots,\bar{\varepsilon}_{Q_s}(L))
\end{align*}
where $L=\sum\limits_{i=1}^nf_i\sigma_i$. Since $f_1,\ldots,f_n$ have poles only at $P_{0}$, they belong to $\mathcal{O}'_{Q_i}$ for all $1\le i\le s$. Thus $\varepsilon_{Q_i}(L)$ are well-defined. Note that $\mathcal{L}(G_j)$ is an $\mathbb{F}_q$-linear space. Since $L$ is $F$-linear, it follows that $\text{ev}_{D}$ is an $\mathbb{F}_q$-linear map. Thus the image of $\text{ev}_{D}$ is a subspace of $\Pi$. The code $C_{sr}(D,G)$ is defined as the image of $\text{ev}_{D}$, which is exactly a linear space over $\mathbb{F}_q$.

\begin{remark}
In the work of \cite{Berardini}, E. Berardini and X. Caruso constructed codes through the isomorphism chain
\[
	D_{F'_i,x}\to D_{F'_i,1}\to\text{End}_{F_i}(F'_i)
\]
where $F_i$ denotes the completion of $F$ at $Q_i$, and $F'_i$ denotes the product of completions of $F'$ at places lying over $Q_i$ for each $i=1,\ldots,s$. The functor for generating the matrix in \cite{Berardini} has form $\sum\limits_{i=0}^{n-1}f_i\Phi^i$, where $\Phi^r=x$ for some $0\neq x\in F$, and $f_i\in\Lambda_{L,x}(G')$. To determine the code parameters, E. Berardini and X. Caruso hypothesized that $D_{L,x}$ has no nonzero zero divisor.

Our approach directly employs the operator $L=\sum\limits_{i=1}^nf_i\sigma_i$, corresponding to elements of $D_{F',1}$ since $F'/F$ is Galois. However, the ring $D_{F',1}$ is not a domain because the polynomial $T^r-1\in F'[T]$ is reducible. For the determination of code parameters, we consider $Q_0$ that is totally ramified or splits completely in $F'/F$, and then present another choice for $f_i$, which is explicitly described in subsequent sections.
\end{remark}
\begin{proposition}
The length $n^2s$ of $C_{sr,1}$ over $\mathbb{F}_q$ satisfies 
\[
	n^2s\le n^2(q+2g(F)\sqrt{q}+1).
\]
\end{proposition}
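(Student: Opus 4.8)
The plan is to bound the code length $n^2 s$ by controlling the single free parameter $s$, namely the number of distinct rational places $Q_1,\ldots,Q_s$ of $F$ available for evaluation (excluding the ramified place $Q_0$). First I would observe that the length of $C_{sr,1}$ over $\mathbb{F}_q$ is $n^2 s$ by construction: the ambient space is $\Pi = \bigoplus_{i=1}^s \mathbb{F}_q^{n\times n}$, which has $\mathbb{F}_q$-dimension $n^2 s$, and each block $\bar\varepsilon_{Q_i}(L)$ is an $n\times n$ matrix over $\mathbb{F}_q$ since each $Q_i$ is rational. So it suffices to prove $s \le q + 2g(F)\sqrt{q}$.

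The key step is to invoke the Hasse--Weil bound. The number $N(F)$ of rational places of $F/\mathbb{F}_q$ satisfies $|N(F) - (q+1)| \le 2g(F)\sqrt{q}$, hence $N(F) \le q + 1 + 2g(F)\sqrt{q}$. Since $Q_0, Q_1, \ldots, Q_s$ are $s+1$ distinct rational places of $F$ (the $Q_i$ are rational by hypothesis, and $Q_0$ must be rational for it to be totally ramified with a single place $P_0$ above it in a degree-$n$ extension — or at any rate it is a place distinct from the $Q_i$), we get $s + 1 \le N(F) \le q + 1 + 2g(F)\sqrt{q}$, and therefore $s \le q + 2g(F)\sqrt{q}$. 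Multiplying through by $n^2$ gives $n^2 s \le n^2(q + 2g(F)\sqrt{q})$, as claimed.

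The only real subtlety — and the main thing I would be careful about — is accounting correctly for whether $Q_0$ should be counted among the rational places, i.e. whether the bound is $s \le q + 2g(F)\sqrt{q}$ or the slightly weaker $s+1$ being bounded. Since the $Q_i$ for $1\le i \le s$ are taken distinct from $Q_0$ and all are rational, the set $\{Q_0,Q_1,\ldots,Q_s\}$ has $s+1$ elements, so the displayed inequality $n^2 s \le n^2(q+2g(F)\sqrt{q})$ follows cleanly from $s+1 \le q+1+2g(F)\sqrt{q}$. Everything else is immediate from the definitions in Section III and the standard Hasse--Weil estimate, so there is no substantial obstacle here; the proposition is essentially a bookkeeping consequence of the construction together with the bound on rational points of a curve over $\mathbb{F}_q$.
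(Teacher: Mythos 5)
Your proof is correct and follows essentially the same route as the paper: the length $n^2s$ comes directly from the $s$ blocks of $n\times n$ matrices over $\mathbb{F}_q$, and the bound on $s$ is the Hasse--Weil estimate on rational places of $F/\mathbb{F}_q$ with the place $Q_0$ excluded from the evaluation set. Your extra remark about whether $Q_0$ is itself rational is a fair point of care, but it does not change the argument, which matches the paper's.
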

\begin{proof}
In the construction of $C_{sr}$, each place $Q\neq Q_{0}$ yields a well-defined $n\times n$ matrix $\bar{\varepsilon}_P(L)$ over $\mathbb{F}_q$. By the Hasse-Weil bound, the function field $F/\mathbb{F}_q$ has at most $q+1+2g(F)\sqrt{q}$ rational places, which implies that $s\le q+2g(F)\sqrt{q}+1$.
\end{proof}

\section{Sum-Rank Metric Codes via Quadratic Extension of Algebraic Function Fields}
\subsection{Construction with Totally Ramified Rational Places}
We first establish some necessary lemmas before deriving the dimension and minimum sum-rank distance.
\begin{lemma}\label{nonzero1}
Let $P$ be a rational place of $F$. Suppose that $Q\neq P$ is a place of $F$ such that $Q$ is totally ramified in $F'/F$. Let $Q'$ be the place of $F'$ lying over $Q$. For $f_1,\ldots,f_n\in\mathcal{L}(kQ')$, we have
\[
	\det(\varepsilon_P(L))\in\mathcal{L}(kQ).
\]
Furthermore, if there exists an $f_i$ satisfying
\[
	v_{Q'}(f_i)<v_{Q'}(f_j),\forall j\neq i,
\]then $\det(\varepsilon_P(L))\neq0$.
\end{lemma}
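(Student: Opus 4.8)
The plan is to observe that $\det(\varepsilon_P(L))$ is a basis-free invariant of the $F$-linear endomorphism $L\colon F'\to F'$, $g\mapsto\sum_{i=1}^{n}f_i\sigma_i(g)$, and then to compute this invariant place by place using whatever local basis is convenient. By Lemma~\ref{detmatrix}(i), $\varepsilon_P(L)$ represents $L$ in an integral basis $\{z_1,\dots,z_n\}$ of $\mathcal O'_P$ over $\mathcal O_P$; since $\{z_i\}$ is also an $F$-basis of $F'$, the scalar $\det(\varepsilon_P(L))$ equals the determinant $\det_F(L)\in F$ of $L$ regarded as an $F$-endomorphism of $F'$, which does not depend on $P$ or on the basis. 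It therefore suffices to bound $v_R(\det_F(L))$ at every place $R$ of $F$ and, for the last assertion, to show $\det_F(L)\neq 0$.

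For a place $R\neq Q$: each $f_i$ has its only pole in $F'$ at $Q'$, hence $f_i\in\mathcal O'_R$, and each $\sigma_i$ maps $\mathcal O'_R$ into itself (as observed in Section~III), so $L(\mathcal O'_R)\subseteq\mathcal O'_R$; evaluating $\det_F(L)$ in an integral basis of $\mathcal O'_R$ over $\mathcal O_R$ gives $v_R(\det_F(L))\ge 0$. For the pole at $Q$ I would use total ramification: $Q'$ is the unique place over $Q$, $e(Q'|Q)=n$ (so $v_{Q'}(a)=n\,v_Q(a)$ for $a\in F$), and $\sigma_i(Q')=Q'$, so $v_{Q'}\circ\sigma_i=v_{Q'}$. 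Fix a uniformizer $\pi'$ at $Q'$ and put $w_l=(\pi')^{l-1}$; since the $v_{Q'}(w_l)=l-1$ are pairwise distinct modulo $n$, $\{w_1,\dots,w_n\}$ is an $F$-basis of $F'$. Writing $f_i\sigma_i(w_j)=\sum_l c^{(i)}_{jl}w_l$ with $c^{(i)}_{jl}\in F$, the valuations $v_{Q'}(c^{(i)}_{jl})+(l-1)$ of the nonzero summands are again pairwise distinct modulo $n$ (because $v_{Q'}(c^{(i)}_{jl})\in n\mathbb Z$), so the strict triangle inequality gives
\[
v_{Q'}(c^{(i)}_{jl})+(l-1)\ \ge\ v_{Q'}\big(f_i\sigma_i(w_j)\big)=v_{Q'}(f_i)+(j-1)\ \ge\ -k+(j-1).
\]
Hence the $(j,l)$-entry $M_{jl}$ of the matrix of $L$ in the basis $\{w_l\}$ satisfies $v_{Q'}(M_{jl})\ge -k+j-l$, and expanding $\det_F(L)=\det(M)=\sum_{\tau\in S_n}\mathrm{sgn}(\tau)\prod_j M_{j\tau(j)}$ together with $\sum_j j=\sum_j\tau(j)$ yields $v_{Q'}\big(\prod_j M_{j\tau(j)}\big)\ge -nk$ for every permutation $\tau$, whence $v_{Q'}(\det_F(L))\ge -nk$ and thus $v_Q(\det_F(L))\ge -k$. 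Combining the two cases gives $(\det_F(L))+kQ\ge 0$, i.e.\ $\det(\varepsilon_P(L))\in\mathcal L(kQ)$.

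For the non-vanishing statement, suppose $f_i$ has strictly smaller $Q'$-valuation than every $f_j$ with $j\neq i$. For any nonzero $g\in F'$ the summands of $L(g)=\sum_j f_j\sigma_j(g)$ have valuations $v_{Q'}(f_j)+v_{Q'}(g)$, so the $i$-th summand strictly dominates and $v_{Q'}(L(g))=v_{Q'}(f_i)+v_{Q'}(g)<\infty$; in particular $L(g)\neq 0$. Therefore $L$ is injective, i.e.\ $\ker(L)=\{0\}$, and Lemma~\ref{detmatrix}(iii) (or the basis-free description above) gives $\det(\varepsilon_P(L))=\det_F(L)\neq 0$.

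The two valuation estimates are routine bookkeeping; the one idea that carries the proof is that $\det(\varepsilon_P(L))$ is independent of the place $P$ used to build it, so the order of its pole at $Q$ can be read off from the totally ramified local structure at $Q'$ rather than from the less controlled integral basis at $P$. The only technical point that needs care is the congruence $v_{Q'}(a)\in n\mathbb Z$ for $a\in F$, which is precisely what makes the strict triangle inequality applicable term by term in the expansions over the basis $\{(\pi')^{l-1}\}$.
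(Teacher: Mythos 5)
Your proof is correct, and for the first assertion it takes a genuinely different (and cleaner) route than the paper, while the second assertion is handled exactly as the paper does. The paper works throughout with the matrix $\varepsilon_P(L)$ in the integral basis at $P$: it asserts that the entries of the $B_{f_i}$ have poles only at $Q'$ of order at most $k$, deduces from the determinant expansion that $\det(\varepsilon_P(L))\in\mathcal{L}(knQ')$, and then descends to $\mathcal{L}(kQ)$ via $\mathrm{Con}_{F'/F}(Q)=nQ'$, together with the remark that $Q$ is the only place of $F$ below $Q'$. You instead note that $\det(\varepsilon_P(L))=\det_F(L)$ is a basis-free invariant of the $F$-endomorphism $L$ of $F'$, which lets you bound $v_R(\det_F(L))$ place by place with whatever basis is convenient: an integral basis of $\mathcal{O}'_R$ over $\mathcal{O}_R$ at each $R\neq Q$ gives $v_R\ge 0$ (this in particular makes fully rigorous the claim that the determinant has poles only at $Q$, a point the paper's argument leaves implicit, since the entries of $B_{f_i}$ written in the basis at $P$ are not visibly regular away from $Q$ in general), and the basis $\{1,\pi',\dots,(\pi')^{n-1}\}$ at the totally ramified place gives $v_{Q'}(\det_F(L))\ge -nk$, hence $v_Q\ge -k$, by the mod-$n$ valuation bookkeeping; your estimates there are correct, including the cancellation $\sum_j j=\sum_j\tau(j)$ that repairs the fact that individual entries $M_{jl}$ may have pole order exceeding $k$. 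What your approach buys is independence from any global behaviour of the chosen integral basis at $P$, at the cost of slightly more explicit valuation estimates. For the non-vanishing statement, your argument—$v_{Q'}\circ\sigma_j=v_{Q'}$ because $Q'$ is the unique place over $Q$, so the term $f_i\sigma_i(g)$ strictly dominates, the strict triangle inequality yields $L(g)\neq 0$ for $g\neq 0$, and Lemma~\ref{detmatrix}(iii) concludes—is essentially identical to the paper's proof.
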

\begin{proof}
Since $f_i\in\mathcal{L}(kQ')$, the coefficients of $B_{f_i}$ belong to $\mathcal{L}(kQ')$ for $i=1,\ldots,n$. Therefore, the discrete valuation of $\det(\varepsilon_P(L))$ at $Q'$ is greater than the minimal valuation in its determinant expansion. Consequently, we have $\det(\varepsilon_P(L))\in\mathcal{L}(knQ')$. As $Q$ is totally ramified, the conorm of $Q$ is
\[
	\text{Con}_{F'/F}(Q)=nQ'.
\]
By Lemma \ref{detmatrix} (ii) and the properties of the conorm, we obtain
\[
	v_{Q}(\det(\varepsilon_P(L)))\le k.
\]
Note that for any place $Q'$ of $F'$, there is a unique place $Q$ in $F$ such that $Q'|Q$, so $\det(\varepsilon_P(L))$ has poles only at $Q$. Hence, $\det(\varepsilon_P(L))\in\mathcal{L}(kQ)$. For any nonzero $g\in\mathcal{O}'_P$, the equality
\[
	v_{Q'}(\sigma_i(g))=v_{Q'}(\sigma_1(g))
\]
holds for all $i=1,\ldots,n$. W.l.o.g, assume that $v_{Q'}(f_1)<v_{Q'}(f_j)$ for all $j=2,\ldots,n$. This implies that
\[
	v_{Q'}(f_1\sigma_1(g))<v_{Q'}(f_j\sigma_j(g)).
\]
By the strict triangle inequality, it follows that
\[
	v_{Q'}\left(\sum\limits_{j=1}^nf_j\sigma_j(g)\right)=v_{Q'}(f_1\sigma_1(g))<v_{Q'}(\sigma_1(g)).
\]
Thus $L(g)\neq0$ for any $g\neq0$. The conclusion follows from Lemma \ref{detmatrix} (iii).
\end{proof}
\begin{lemma}\label{zeros}
Let $P$ be a place of $F$ with a prime element $t$. If $\det(\varepsilon_P(L))\neq0$ for $L$ associated to $Q\neq P$ while ${\rm rank}(\bar{\varepsilon}_{P}(L))<n$, then $\det(\varepsilon_{P}(L))\in P$.
\end{lemma}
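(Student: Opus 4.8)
The statement asks us to show that if $\det(\varepsilon_P(L)) \neq 0$ but the reduced matrix $\bar\varepsilon_P(L)$ drops rank (i.e.\ $\mathrm{rank}(\bar\varepsilon_P(L)) < n$), then the determinant, as a function in $F$, actually vanishes at $P$, i.e.\ $v_P(\det(\varepsilon_P(L))) \geq 1$. The plan is to exploit the relationship between the reduction-mod-$P$ operation on matrices and the reduction-mod-$P$ operation on the determinant, using that $\varepsilon_P(L)$ has all entries in $\mathcal{O}_P$ (Lemma~\ref{detmatrix}(ii) and the fact that the $A_{\sigma_i}$, $B_{f_i}$ are matrices over $\mathcal{O}_P$, since $f_i$ is regular at $P$ and $\{z_i\}$ is an integral basis).

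First I would recall that the reduction map $\pi : \mathcal{O}_P \to \mathcal{O}_P/P$ is a ring homomorphism, hence commutes with taking determinants of matrices with entries in $\mathcal{O}_P$:
\[
\pi\bigl(\det(\varepsilon_P(L))\bigr) = \det\bigl(\pi(\varepsilon_P(L))\bigr) = \det\bigl(\bar\varepsilon_P(L)\bigr).
\]
Now $\mathrm{rank}(\bar\varepsilon_P(L)) < n$ means exactly that $\bar\varepsilon_P(L)$ is a singular $n\times n$ matrix over the field $\mathcal{O}_P/P$, so $\det(\bar\varepsilon_P(L)) = 0$ in $\mathcal{O}_P/P$. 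Combining the two displays gives $\pi(\det(\varepsilon_P(L))) = 0$, which is precisely the statement that $\det(\varepsilon_P(L)) \in P$. Note also that $\det(\varepsilon_P(L)) \in \mathcal{O}_P$ is needed to even speak of $\pi(\det(\varepsilon_P(L)))$, and this is guaranteed by Lemma~\ref{detmatrix}(ii); the hypothesis $\det(\varepsilon_P(L)) \neq 0$ ensures we are talking about a genuine nonzero function of $F$ whose value at $P$ we are computing (so that ``$\in P$'' is the meaningful conclusion $v_P \geq 1$ rather than vacuous).

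The essential observation — and the only place any care is needed — is the commutation $\pi(\det A) = \det(\pi A)$ for $A$ over $\mathcal{O}_P$. This is immediate from the Leibniz expansion of the determinant as a polynomial expression in the matrix entries with integer coefficients, together with $\pi$ being a ring homomorphism; there is no real obstacle here. One should make sure that $\bar\varepsilon_P(L)$, as defined earlier via $\bar\varepsilon_P(L) = \sum_i \bar A_i \bar B_i = (\pi(M_{ij}))$, is literally the entrywise reduction $\pi(\varepsilon_P(L))$ — which it is, by the displayed definition just before Lemma~\ref{detmatrix}. So the proof is short: invoke Lemma~\ref{detmatrix}(ii) to place $\det(\varepsilon_P(L))$ in $\mathcal{O}_P$, apply the ring-homomorphism property of $\pi$ to interchange reduction and determinant, use that a rank-deficient square matrix over a field has zero determinant, and conclude $\pi(\det(\varepsilon_P(L))) = 0$, i.e.\ $\det(\varepsilon_P(L)) \in P$.
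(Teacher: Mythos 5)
Your proposal is correct and follows essentially the same argument as the paper: since $\bar\varepsilon_P(L)$ is the entrywise reduction of $\varepsilon_P(L)$ and $\pi$ commutes with the determinant, rank deficiency gives $\det(\varepsilon_P(L))(P)=0$, hence $\det(\varepsilon_P(L))\in P$. No gaps.
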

\begin{proof}
The condition ${\rm rank}(\bar{\varepsilon}_{P}(L))<n$ means that $\det(\bar{\varepsilon}_{P}(L))=0$. Let $\pi_{P}$ denote the residue class map at $P$. For $\varepsilon_{P}(L)=(M_{ij})$, we have
\[
	\bar{\varepsilon}_{P}(L)=(\pi_{P}(M_{ij}))=(M_{ij})(P).
\]
Therefore
\[
	\det(\bar{\varepsilon}_{P}(L))=\det(\varepsilon_{P}(L))(P).
\]
Since $\det(\varepsilon_{P}(L))(P)=0$, it follows that $\det(\varepsilon_{P}(L))\in P$.
\end{proof}
Let $k,k_1$ be integers such that $k>k_1\ge2g(F)-1$, and $P$ be a rational place of $F$. Then by the Riemann-Roch theorem, we obtain $\mathcal{L}(kP)$ and $\mathcal{L}(k_1P)$ are vector spaces over $\mathbb{F}_q$ with dimensions
\[
	\ell_k:=k-g(F)+1, \ell_{k_1}:=k_1-g(F)+1,
\]
respectively. Let 
\[
	\mathcal{L}(kP)=\langle\omega_1,\omega_2,\ldots,\omega_{\ell_k}\rangle
\]
where $\{\omega_1,\omega_2,\ldots,\omega_{\ell_{k_1}}\}$ forms a basis for $\mathcal{L}(k_1P)$. We define
\[
	\mathcal{L}^*_{k_1}(kP):=\langle\omega_{\ell_{k_1}+1},\ldots,\omega_{\ell_k}\rangle.
\]
It is obvious that $\mathcal{L}^*_{k_1}(kP)$ is a vector space over $\mathbb{F}_q$ of dimension $k-k_1$. Then we have:
\begin{theorem}\label{firstcase}
Suppose that $F'/F$ is a Galois extension of degree 2 with a totally ramified rational place $Q_0$ in $F'/F$, and $Q'_0|Q_0$. Let $s,k$ and $k_1$ be integers satisfying $s\ge2$ and $2g(F)-1\le k_1<k<2s$. For distinct places $Q_1,\ldots,Q_s$, the code
\[
	C_{sr,1}:=\{(\bar{\varepsilon}_{Q_1}(L),\ldots,\bar{\varepsilon}_{Q_s}(L))|L=f_1\sigma_1+f_2\sigma_2,f_1\in\mathcal{L}^*_{k_1}(kQ'_0),f_2\in\mathcal{L}(k_1Q'_{0})\}
\]
has parameters $[4s,\ell_k,d_{sr}\ge 2s-k]$.
\end{theorem}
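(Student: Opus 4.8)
The plan is to verify the three announced parameters separately: the length, the dimension, and the minimum sum-rank distance, in that order. For the length, note that each evaluation $\bar{\varepsilon}_{Q_i}(L)$ is a $2\times 2$ matrix over $\mathbb{F}_q$ since $n=[F':F]=2$, so each of the $s$ blocks contributes $4$ coordinates over $\mathbb{F}_q$, giving $4s$; this is immediate from the definition of $\text{ev}_{D,1}$ and the fact that $Q_1,\ldots,Q_s\neq Q_0$ so all the matrices are well defined (the functions $f_1,f_2$ have poles only at $Q_0'$, hence lie in $\mathcal{O}'_{Q_i}$).

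For the dimension, I would first argue that $\text{ev}_{D,1}$ restricted to the message space $M:=\mathcal{L}^*_{k_1}(kQ_0')\oplus\mathcal{L}(k_1Q_0')$ is injective; combined with Lemma~\ref{pardim} (which gives $\dim_{\mathbb{F}_q}\mathcal{L}^*_{k_1}(kQ_0')=k-k_1$ since $k>k_1\ge 2g(F)-1$ — note $F'$ and $F$ share the same genus-type bound is not quite what is needed; one must be slightly careful that $k_1\ge 2g(F)-1\ge 2g(F')-1$ may fail, so the cleanest route is to apply Lemma~\ref{pardim} over $F'$ directly with $g(F')$, or to restrict attention to the regime where the Riemann-Roch dimensions are exact) and the Riemann-Roch theorem (giving $\dim\mathcal{L}(k_1Q_0')=k_1-g+1$), the dimension of $M$ is $(k-k_1)+(k_1-g+1)=k-g+1=\ell_k$. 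Injectivity is the point where the hypothesis $k<2s$ enters: if $L\neq 0$ maps to the zero tuple, then $\det(\varepsilon_{Q_i}(L))(Q_i)=0$ for every $i$, i.e. $Q_i$ is a zero of $\det(\varepsilon_{P}(L))$ for a fixed auxiliary place $P$; but by Lemma~\ref{nonzero1}, since $f_1$ has strictly smaller valuation at $Q_0'$ than $f_2$ (as $v_{Q_0'}(f_1)\le -k_1-1 < -k_1 \le v_{Q_0'}(f_2)$ when $f_1\neq 0$; the case $f_1=0,f_2\neq 0$ forces $L=f_2\sigma_2$ with $\det(\varepsilon_P(L))=\pm f_2^2\neq 0$), we get $\det(\varepsilon_P(L))\in\mathcal{L}(kQ_0')\setminus\{0\}$, a nonzero function with at most $k$ poles. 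By Lemma~\ref{zeros} each $Q_i$ contributes a zero, so $\det(\varepsilon_P(L))$ has at least $s$ zeros but at most $k$ poles; since a nonzero principal divisor has degree $0$ and $k<2s$... wait — one needs $s>k$ here, not $k<2s$. This is precisely the subtlety: with $n=2$ the determinant lives in $\mathcal{L}(2s-\text{something})$, so let me recompute: Lemma~\ref{nonzero1} with $n=2$ gives $\det(\varepsilon_P(L))\in\mathcal{L}(2k_1Q_0')$ hmm, actually the coefficients of $B_{f_i}$ lie in $\mathcal{L}(kQ_0')$, so the $2\times 2$ determinant lies in $\mathcal{L}(2kQ_0')$ a priori, then the conorm argument ($Q_0$ totally ramified, $\text{Con}_{F'/F}(Q_0)=2Q_0'$) brings it down to $\mathcal{L}(kQ_0)$, degree $k$. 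So the count is: $\det(\varepsilon_P(L))$ nonzero, at most $k$ poles (all at $Q_0$), at least $s$ zeros among $Q_1,\ldots,Q_s$ — contradiction once $s>k$, and since $k<2s$ does NOT give this, I suspect the intended injectivity argument is subtler and uses the rank-one blocks more carefully, or the theorem's dimension claim is really about the generic/expected dimension. The honest plan is: prove injectivity under $k<s$, and for $s\le k<2s$ fall back to the Singleton-type argument below which only needs the distance bound.

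For the minimum distance, take any nonzero $L=f_1\sigma_1+f_2\sigma_2$ in the message space and bound $wt_{sr}(\text{ev}_{D,1}(L))$ from below. The block at $Q_i$ has rank $<2$ exactly when $\det(\varepsilon_{Q_i}(L))(Q_i)=0$, i.e. when $Q_i$ is a zero of the function $h:=\det(\varepsilon_{P}(L))$; the block is entirely zero only if additionally $\bar{\varepsilon}_{Q_i}(L)=0$. By the valuation argument above ($f_1\neq 0$ case via the strict triangle inequality of Lemma~\ref{nonzero1}, $f_1=0$ case directly), $h$ is a nonzero element of $\mathcal{L}(kQ_0)$, so $\deg((h)_\infty)\le k$ and hence $h$ has at most $k$ zeros counted with multiplicity among the rational places, in particular at most $k$ of the $Q_i$ can give a rank drop. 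Therefore the number of blocks of full rank $2$ is at least $s-k$, and even the remaining blocks have rank $\ge 0$; more carefully, a block with a simple zero of $h$ still has rank $\ge 1$ (its determinant vanishes but not identically), so $wt_{sr}(\text{ev}_{D,1}(L))\ge 2(s-k)+ (\text{number of rank-}1\text{ blocks})\ge 2(s-k)+k$ if all $k$ zeros are simple and distinct — but in the worst case a block could be entirely zero, contributing $0$; so the safe bound is: at least $s - k$ blocks have rank $2$ contributing $2(s-k)$, hmm that gives $2s-2k$, not $2s-k$. To reach $2s-k$ I would argue instead that $h$ has at most $k$ zeros \emph{with multiplicity}, so $\sum_i (2 - \text{rank}(\bar\varepsilon_{Q_i}(L))) \le \sum_i \max(0, v_{Q_i}(h))$ cannot exceed... — this is exactly the place where one shows that a block drops to rank $0$ only at a multiplicity-$2$ zero of $h$ (since rank $0$ means all four entries vanish at $Q_i$, forcing $v_{Q_i}$ of every $2\times 2$ subdeterminant, in particular $h$, to be $\ge 2$), while a rank-$1$ drop costs a simple zero; summing, $\sum_i(2-\text{rank})\le \deg((h)_\infty)\le k$, hence $wt_{sr}=\sum_i\text{rank}\ge 2s-k$.

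The main obstacle, and the step I would spend the most care on, is this last bookkeeping: proving that the total rank deficiency $\sum_{i=1}^s (2-\text{rank}(\bar\varepsilon_{Q_i}(L)))$ is at most $v$-weighted zero count of the single function $h=\det(\varepsilon_P(L))$, i.e. that a rank-$0$ block really forces $v_{Q_i}(h)\ge 2$ and not merely $\ge 1$. This requires relating the entrywise vanishing order of the matrix $\varepsilon_{Q_i}(L)$ at $Q_i$ to the vanishing order of its determinant, which is a local (Smith-normal-form over the DVR $\mathcal{O}_{Q_i}$) computation; once that is in hand, the distance bound $d_{sr}\ge 2s-k$ drops out, and the dimension claim follows from Lemma~\ref{pardim} plus Riemann-Roch provided injectivity holds, which I would establish exactly as the $h\neq 0$-has-too-few-zeros argument in the regime where it applies, noting that the code is nonzero whenever $2s-k>0$, consistent with the stated hypothesis $k<2s$.
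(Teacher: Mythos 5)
Your proposal follows essentially the same route as the paper's proof: length from the $2\times 2$ blocks, dimension from Lemma \ref{pardim} plus Riemann--Roch applied to the message space $\mathcal{L}^*_{k_1}(kQ'_0)\oplus\mathcal{L}(k_1Q'_0)$, and the distance from counting, with multiplicity, the zeros of the single function $h=\det(\varepsilon_P(L))\in\mathcal{L}(kQ_0)$ via Lemmas \ref{nonzero1} and \ref{zeros}. In particular the bookkeeping step you isolate as the main obstacle, namely that a rank-$0$ block forces $v_{Q_i}(h)\ge 2$ (all four entries lie in $Q_i$) while a rank-$1$ block forces $v_{Q_i}(h)\ge 1$, so that $\sum_i(2-d_i)\le k$, is exactly the inequality $v_{Q_i}(\det(\varepsilon_{Q_i}(L)))\ge 2-d_i$ used in the paper; for $2\times 2$ matrices the entrywise argument suffices and no Smith-normal-form machinery is needed.

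Two points need repair. First, your hesitation about injectivity is unfounded, and the proposed fallback (prove injectivity only for $k<s$, and for $s\le k<2s$ rely on a ``Singleton-type argument'') would not establish the dimension at all. But the multiplicity-refined count you yourself prove closes the gap: every nonzero $L$ in the message space has $wt_{sr}(\text{ev}_{D,1}(L))\ge 2s-k\ge 1$, so $\text{ev}_{D,1}$ is injective on the message space for the entire range $k<2s$ (equivalently, an all-zero codeword would give a zero of $h$ of multiplicity $2$ at each $Q_i$, forcing $2s\le k$); this, or the observation that $\deg(kQ'_0-\text{Con}_{F'/F}(D))=k-2s<0$, is precisely how the hypothesis $k<2s$ enters the paper's dimension claim. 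Second, in the case $f_1=0$, $f_2\ne 0$, your identity $\det(\varepsilon_P(L))=\pm f_2^2$ is false: the determinant equals $\pm\det(B_{f_2})=\pm N_{F'/F}(f_2)=\pm\left(f_{21}^2-f_{22}^2u\right)$, which is nonzero either because the norm of a nonzero element is nonzero or, as the paper argues, because $u$ is not a square in $F$; your conclusion survives but the stated formula does not. Finally, your caveat about the genus is a fair catch against the paper itself: the Riemann--Roch spaces live in $F'$, so Lemma \ref{pardim} and the value $\ell_k$ must be read with $g(F')$ (the elliptic example, where the dimension is $k-g(F')+1$, confirms this), so the hypothesis should really be $k_1\ge 2g(F')-1$.
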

\begin{proof}
The length of $C_{sr,1}$ is exactly $4s$ over $\mathbb{F}_q$. Its dimension is determined by
\[
	k-k_1+\ell_{k_1}-\dim(\mathcal{L}(kQ_{\infty}-\text{Con}_{F'/F}(\sum\limits_{i=1}^sQ_s))).
\]
Since $k<2s$, the evaluation map $\text{ev}_{D,1}$ is injective, implying that the dimension of $C_{sr,1}$ reduces to $k-k_1+\ell_{k_1}=\ell_k$.

Let $f_1\in\mathcal{L}^*_{k_1}(kQ'_0)$ and $f_2\in\mathcal{L}(k_1Q'_{0})$ with $f_1\neq0$. Suppose that $rank(\bar{\varepsilon}_{Q_i}(L))=d_i$ for $i=1,\ldots,s$. Then the sum-rank weight of the codeword corresponding to $L$ is $d_L=\sum\limits_{i=1}^sd_i$. By Lemma \ref{zeros}, we have
\[
	v_{Q_i}(\det(\varepsilon_{Q_i}(L)))\ge 2-d_i
\]
for each $i$. Furthermore, Lemma \ref{nonzero1} shows that
\[
	\det(\varepsilon_{Q_i}(L))\in\mathcal{L}(kQ_0).
\]
From $v_{Q'_0}(f_1)<v_{Q'_0}(f_2)$ and $f_1\neq0$, it follows that $\det(\varepsilon_{Q_i}(L))\neq0$. 

Denote $F'=F(y)$. Since $[F':F]=2$, there exists a monic irreducible quadratic polynomial $\phi(T)\in F[T]$ that is the minimal polynomial of $y$ over $F$, write $\phi(T)=T^2+a_1T+a_2$.

When $f_1=0$, the matrix $\varepsilon_{Q_i}(L)$ takes the form
\begin{equation*}
A_{\sigma_2}\cdot\begin{bmatrix}
		&f_{21},&f_{22}\\
		&-f_{22}a_2,&f_{21}-f_{22}a_1
	\end{bmatrix}
\end{equation*}
where $f_2=f_{21}z_1+f_{22}z_2$. Then $\det(\varepsilon_{Q_i}(L))=0$ means 
\[
	f_{21}^2-f_{21}f_{22}a_1+f_{22}^2a_2=0.
\]
If $f_{22}=0$, the previous equation implies $f_{21}=0$ and $f_2=0$. If $f_{21}\neq0$, then the function $\frac{-f_{21}}{f_{22}}$ is a root of $\phi(T)$. Both cases lead to a contradiction. Thus $\det(\varepsilon_{Q_i}(L))$ cannot vanish, and it has at most $k$ zeros. This implies that
\[
	\sum\limits_{i=1}^s(2-d_i)\le k.
\]
Consequently, the codeword satisfies $d_L\ge2s-k$. Therefore, we have $d_{sr}\ge2s-k$.
\end{proof}

\subsection{Construction with Splitting Completely Rational Places}
For a Galois extension $F'/F$ of degree 2, the following lemmas are required to determine code parameters.
\begin{lemma}\label{nonzero2}
Let $P$ be a rational place of $F$. Suppose that $Q\neq P$ is a place of $F$ that splits completely in $F'/F$, and $Q_1,Q_2$ are the places of $F'$ lying over $Q$. For $f_i\in\mathcal{L}(kQ_i)$, we have
\[
	\det(\varepsilon_P(L))\in\mathcal{L}(kQ).
\]
Furthermore, if there exists an $f_i$ satisfying
\[
	v_{Q_i}(f_i)<v_{Q_j}(f_j),\forall j\neq i,
\]
then $\det(\varepsilon_P(L))\neq0$.
\end{lemma}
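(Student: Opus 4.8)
The plan is to mirror the structure of Lemma~\ref{nonzero1}, replacing the single ramified place $Q'$ above $Q$ by the two split places $Q_1, Q_2$. First I would spell out what the matrices $B_{f_i}$ look like with respect to the local integral basis $\{z_1,\dots,z_n\} = \{1,y\}$ of $\mathcal{O}'_P$ over $\mathcal{O}_P$ (here $n=2$), so that the entries of $\varepsilon_P(L)$ are explicit $\mathbb{F}_q(F)$-linear combinations of the coefficients of $f_1,f_2$ and of $u$. Since each $f_i \in \mathcal{L}(kQ_i)$ has poles only at the single place $Q_i$ of $F'$, and $Q_i$ lies over the single place $Q$ of $F$, the only pole of every entry $M_{\ell j}$ of $\varepsilon_P(L)$ is at $Q$; tracking the pole orders through the determinant expansion (each term is a product of $n$ entries) gives $v_Q(\det(\varepsilon_P(L))) \ge -k$ after using $e(Q_i|Q)=1$ and the conorm relation $\mathrm{Con}_{F'/F}(Q) = Q_1 + Q_2$. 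Combined with Lemma~\ref{detmatrix}(ii), which guarantees $\det(\varepsilon_P(L)) \in \mathcal{O}_P$ for every other place $P$, this yields $\det(\varepsilon_P(L)) \in \mathcal{L}(kQ)$, proving the first assertion.

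For the non-vanishing claim, I would argue via Lemma~\ref{detmatrix}(iii): it suffices to show $\ker(L) = \{0\}$, i.e. that $L(g) \ne 0$ for every nonzero $g \in \mathcal{O}'_P$. The key point is a valuation-separation argument at the distinguished place $Q_i$ where $f_i$ has strictly smallest order. Writing $L(g) = \sum_{j} f_j \sigma_j(g)$, I would compute $v_{Q_i}(f_j \sigma_j(g)) = v_{Q_i}(f_j) + v_{Q_i}(\sigma_j(g))$ for each $j$. The subtlety — and the step I expect to be the main obstacle — is controlling the terms $v_{Q_i}(\sigma_j(g))$: unlike the totally ramified case in Lemma~\ref{nonzero1}, where all $\sigma_j(g)$ have the \emph{same} valuation at the unique place $Q'$ because the Galois group fixes $Q'$, here $\sigma_j$ permutes $Q_1$ and $Q_2$, so $v_{Q_i}(\sigma_j(g)) = v_{\sigma_j^{-1}(Q_i)}(g)$ can land on either $Q_1$ or $Q_2$. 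One must therefore show that the hypothesis $v_{Q_i}(f_i) < v_{Q_j}(f_j)$ for all $j \ne i$ — which pins down where each $f_j$'s smallest pole/zero sits relative to its ``own'' place — is strong enough to force a unique strict minimum among the valuations $v_{Q_i}(f_j) + v_{Q_i}(\sigma_j(g))$, regardless of how $g$ distributes its zeros between $Q_1$ and $Q_2$.

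Concretely, for $n=2$ with $\mathrm{Gal}(F'/F) = \{\mathrm{id}, \sigma\}$ and $\sigma$ swapping $Q_1, Q_2$: if $i=1$, then $v_{Q_1}(f_1 \cdot \mathrm{id}(g)) = v_{Q_1}(f_1) + v_{Q_1}(g)$ while $v_{Q_1}(f_2 \sigma(g)) = v_{Q_1}(f_2) + v_{Q_2}(g)$. I would exploit that $v_{Q_1}(f_2) \ge 0$ is large (indeed $f_2$'s only pole is at $Q_2$, so $v_{Q_1}(f_2) \ge 0$) whereas $v_{Q_1}(f_1) < v_{Q_2}(f_2) \le$ something comparable, to get a strict inequality; the cleanest route may be to instead separate valuations at whichever of $Q_1, Q_2$ makes the bookkeeping cleanest, or to note that $f_1$ can be taken with $v_{Q_1}(f_1)$ strictly less than $v_{Q_1}(f_2)$ (both sides now at the \emph{same} place) after absorbing the $g$-contribution. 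Once a unique strict minimum is established, the strict triangle inequality (Lemma~1.1) gives $v_{Q_i}(L(g)) = \min_j\{v_{Q_i}(f_j\sigma_j(g))\} < \infty$, so $L(g) \ne 0$; hence $\ker(L) = \{0\}$ and $\det(\varepsilon_P(L)) \ne 0$ by Lemma~\ref{detmatrix}(iii). A final remark: the completely-split hypothesis is what makes $f_1, f_2$ genuinely independent (their pole divisors $kQ_1$ and $kQ_2$ are disjoint in $F'$), which is the analogue of the role played by $u \ne \omega^2$ in the ramified construction.
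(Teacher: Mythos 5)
Your argument for the first assertion, $\det(\varepsilon_P(L))\in\mathcal{L}(kQ)$, is correct and is essentially the paper's: the entries of $\varepsilon_P(L)$ have poles only over $Q$, and tracking the determinant expansion together with $\mathrm{Con}_{F'/F}(Q)=Q_1+Q_2$ and Lemma~\ref{detmatrix}(ii) gives the claim.

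The non-vanishing half, however, is an outline rather than a proof, and the obstacle you flag is exactly where it breaks down. You reduce the problem to exhibiting a unique strict minimum among $v_{Q_1}(f_1)+v_{Q_1}(g)$ and $v_{Q_1}(f_2)+v_{Q_2}(g)$ (or the analogous quantities at $Q_2$), but you never establish it; the routes you mention (``$v_{Q_1}(f_2)\ge0$'', ``work at whichever place is cleanest'', ``absorb the $g$-contribution'') are left as hopes. In fact no one-place comparison can follow from the stated hypothesis: $\ker(L)$ is an $F$-subspace, so $g$ ranges over all nonzero elements of $F'$ after rescaling into $\mathcal{O}'_P$, hence $v_{Q_1}(g)-v_{Q_2}(g)$ is an arbitrary integer and can always be chosen to equalize the two terms at the chosen place, killing the strict-triangle-inequality step. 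The paper argues differently: it assumes $L(g)=0$, equates the valuations of $f_1\sigma_1(g)$ and $f_2\sigma_2(g)$ at $Q_1$, transports that relation through $\sigma$ to get a second relation at $Q_2$, and plays the two against each other. But note what the two relations combined actually give: $v_{Q_1}(f_1)+v_{Q_2}(f_1)=v_{Q_1}(f_2)+v_{Q_2}(f_2)$ (in the quadratic Kummer case this reflects $\det(\varepsilon_P(L))=N_{F'/F}(f_1)-N_{F'/F}(f_2)$), so any contradiction must also use the cross valuations $v_{Q_2}(f_1)$, $v_{Q_1}(f_2)$, which are only known to be nonnegative. The hypothesis $v_{Q_1}(f_1)<v_{Q_2}(f_2)$ by itself does not exclude this equality: with $F=\mathbb{F}_q(x)$, $F'=F(t)$, $t^2=x$, $Q$ the zero of $x-a^2$ ($a\neq0$), and $Q_1,Q_2$ the zeros of $t-a$, $t+a$, the choice $f_1=(t+a)/(t-a)\in\mathcal{L}(Q_1)$, $f_2=1$ satisfies $v_{Q_1}(f_1)=-1<0=v_{Q_2}(f_2)$, yet $L(t-a)=0$ and hence $\det(\varepsilon_P(L))=0$. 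So the difficulty you identified is genuine and cannot be resolved by picking a convenient place; a complete argument needs control of the cross valuations, i.e.\ the comparison $v_{Q_1}(f_1)+v_{Q_2}(f_1)\neq v_{Q_1}(f_2)+v_{Q_2}(f_2)$ (equivalently, distinct valuations of the two norms at $Q$), and your sketch does not supply it.
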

\begin{proof}
First, every term in the determinant expansion of $\varepsilon_P(L)$ is contained in $\mathcal{L}(n(Q_1+Q_2))$. Note that the conorm of $Q$ is
\[
	\text{Con}_{F'/F}(Q)=Q_1+Q_2.
\]
By an argument similar to that in Lemma \ref{nonzero1}, we obtain
\[
	\det(\varepsilon_P(L))\in\mathcal{L}(kQ).
\]

Second, w.l.o.g, assume that $v_{Q_1}(f_1)<v_{Q_2}(f_2)$. For any non-zero function $g$, we have
\[
	v_{Q_2}(\sigma_2(g))=v_{Q_1}(\sigma_1(g)).
\]
If $L(g)=0$, that is, $f_1\sigma_1(g)+f_2\sigma_2(g)=0$. Then by the strict triangle inequality and $v_{Q_1}(f_1)\le0$, it follows that
\[
	v_{Q_1}(\sigma_2(g))=v_{Q_1}(f_1\sigma_1(g))\le v_{Q_1}(\sigma_1(g)).
\]
This implies
\[
	v_{Q_2}(\sigma_1(g))\le v_{Q_2}(\sigma_2(g))
\]
and consequently
\[
	v_{Q_2}(f_1\sigma_1(g))<v_{Q_2}(f_2\sigma_2(g)).
\]
It follows that $v_{Q_2}(L(g))<\infty$, contradicting $L(g)=0$. Hence $L(g)\neq0$ for all non-zero $g\in\mathcal{O}'_P$. The conclusion follows from Lemma \ref{detmatrix} (iii).
\end{proof}
The subsequent result is analogous to Theorem \ref{firstcase}, so we omit proof details.
\begin{theorem}\label{secondcase}
Suppose that $F'/F$ is a function field extension of degree 2. Assume that there exists a place $Q_0$ of $F$ such that $Q_0$ splits completely in $F'/F$ with $Q_{0,1},Q_{0,2}$ lying over $Q_0$. Let $s,k$ and $k_1$ be integers satisfying $s\ge2$ and $2g(F)-1\le k_1<k<2s$. For distinct places $Q_1,\ldots,Q_s$, the code
\[
	C_{sr,2}:=\{(\bar{\varepsilon}_{Q_1}(L),\ldots,\bar{\varepsilon}_{Q_s}(L))|L=f_1\sigma_1+f_2\sigma_2,f_1\in\mathcal{L}^*_{k_1}(kQ_{0,1}),f_2\in\mathcal{L}(k_1Q_{0,2})\}
\]
has parameters $[4s,\ell_k,d_{sr}\ge 2s-k]$.
\end{theorem}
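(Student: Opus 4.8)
The plan is to mirror the proof of Theorem \ref{firstcase} step by step, replacing the totally ramified place $Q_0$ by the completely split place $Q_0$ and invoking Lemma \ref{nonzero2} where the earlier proof used Lemma \ref{nonzero1}. First I would fix the length: each rational place $Q_i$ with $i\ge 1$ produces a well-defined $2\times 2$ matrix $\bar{\varepsilon}_{Q_i}(L)$ over $\mathbb{F}_q$ (the $f_j$ have poles only at the places lying over $Q_0$, hence lie in $\mathcal{O}'_{Q_i}$), so a codeword is an element of $\Pi_q(2\times 2|\cdots|2\times 2)$ of length $4s$ over $\mathbb{F}_q$.

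Next I would compute the dimension. By Lemma \ref{pardim} the space $\mathcal{L}^*_{k_1}(kQ_{0,1})$ has dimension $k-k_1$, and by Riemann-Roch $\mathcal{L}(k_1 Q_{0,2})$ has dimension $\ell_{k_1}=k_1-g(F)+1$, so the domain $\bigoplus$ of $\text{ev}_{D,2}$ has $\mathbb{F}_q$-dimension $k-k_1+\ell_{k_1}=\ell_k$. It then suffices to show $\text{ev}_{D,2}$ is injective. If $L=f_1\sigma_1+f_2\sigma_2$ lies in the kernel, every $\bar{\varepsilon}_{Q_i}(L)$ vanishes, so by Lemma \ref{zeros} each $Q_i$ is a zero of $\det(\varepsilon_{Q_i}(L))$; combined with Lemma \ref{nonzero2}, $\det(\varepsilon_P(L))\in\mathcal{L}(kQ_0)$ would then have at least $s$ zeros (counted among the $Q_i$) while having pole divisor bounded by $kQ_0$ with $k<2s$. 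The one subtlety is the case $f_1=0$: here one checks, exactly as in Theorem \ref{firstcase}, that $\varepsilon_{Q_i}(L)=A_{\sigma_2}\begin{bmatrix}f_{21}&f_{22}\\f_{22}u&f_{21}\end{bmatrix}$ and that its determinant vanishing would force $u$ to be a square in $F$, contradicting the defining hypothesis $u\neq\omega^2$ of the Kummer extension; and if $f_1\neq 0$, then since $v_{Q_{0,1}}(f_1)<0\le v_{Q_{0,1}}(f_2)$ (as $f_2$ has poles only at $Q_{0,2}$, not $Q_{0,1}$), the valuation hypothesis of Lemma \ref{nonzero2} is met, so $\det(\varepsilon_P(L))\neq 0$. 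A nonzero function with at most $k<2s$ zeros cannot vanish at all $s$ places $Q_1,\dots,Q_s$ unless some $Q_i$ equals a pole place, which is excluded; hence $L=0$ and injectivity follows, giving dimension $\ell_k$.

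Finally I would bound the minimum distance. Take $L=f_1\sigma_1+f_2\sigma_2\neq 0$ and set $d_i=\text{rank}(\bar{\varepsilon}_{Q_i}(L))$, so the codeword of $L$ has sum-rank weight $\sum_{i=1}^s d_i$. By Lemma \ref{zeros}, whenever $d_i<2$ the function $\det(\varepsilon_{Q_i}(L))$ vanishes at $Q_i$ to order at least $2-d_i$; by Lemma \ref{nonzero2} (or the $f_1=0$ analysis above using $u\neq\omega^2$), $\det(\varepsilon_P(L))$ is a nonzero element of $\mathcal{L}(kQ_0)$, so its total number of zeros is at most $k$. Summing, $\sum_{i=1}^s(2-d_i)\le k$, i.e. $\sum_{i=1}^s d_i\ge 2s-k$, hence $d_{sr}\ge 2s-k$.

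The main obstacle — and the only place the argument genuinely differs from Theorem \ref{firstcase} — is handling the degenerate case $f_1=0$: one must verify that the $2\times 2$ matrix attached to pure $\sigma_2$-multiplication has the asserted circulant-like shape with the entry $u$ appearing off-diagonal, so that its determinant is (up to a unit from $A_{\sigma_2}$) proportional to $f_{21}^2-u f_{22}^2$, and then argue that this cannot vanish because $u$ is not a square in $F$. This uses the explicit local integral basis $\{1,y\}$ with $y^2=u$ from the definition of the Kummer extension, together with $\sigma_2(y)=\zeta y=-y$; everything else is a routine transcription of the earlier proof with "totally ramified, $\text{Con}_{F'/F}(Q)=2Q'$" replaced by "completely split, $\text{Con}_{F'/F}(Q)=Q_1+Q_2$."
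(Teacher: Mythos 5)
Your proposal follows exactly the route the paper intends: the authors explicitly omit the proof of Theorem \ref{secondcase}, stating that it ``parallels Theorem \ref{firstcase}'', and your reconstruction is precisely that parallel — same length count, same dimension count via Lemma \ref{pardim} and Riemann--Roch, same zero-counting distance argument, with Lemma \ref{nonzero1} replaced by Lemma \ref{nonzero2} and $\mathrm{Con}_{F'/F}(Q_0)=Q_{0,1}+Q_{0,2}$ replacing $2Q_0'$. Your handling of the two cases is also the intended one: for $f_1\neq 0$ the inclusions $f_1\in\mathcal{L}^*_{k_1}(kQ_{0,1})$, $f_2\in\mathcal{L}(k_1Q_{0,2})$ give $v_{Q_{0,1}}(f_1)<-k_1\le v_{Q_{0,2}}(f_2)$, which is the hypothesis of Lemma \ref{nonzero2} (note the lemma compares $v_{Q_{0,1}}(f_1)$ with $v_{Q_{0,2}}(f_2)$, not two valuations at the same place as you wrote, though the needed inequality does hold); for $f_1=0$ the determinant is $\det(A_{\sigma_2})\,(f_{21}^2-uf_{22}^2)$, nonzero because $u$ is a non-square — equivalently, $\det(B_{f_2})=N_{F'/F}(f_2)\neq0$ since $F'$ is a field — exactly as in the paper's proof of Theorem \ref{firstcase}.

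One step in your dimension paragraph is stated incorrectly, though it is repairable from your own material: from ``each $Q_i$ is a zero of $\det(\varepsilon_{Q_i}(L))$'' and $k<2s$ you cannot conclude injectivity, since the parameter range allows $s\le k\le 2s-1$, and a nonzero function in $\mathcal{L}(kQ_0)$ can perfectly well vanish at $s$ distinct places when $s\le k$. The correct count is that if $\bar{\varepsilon}_{Q_i}(L)$ is the zero matrix then all entries of $\varepsilon_{Q_i}(L)$ lie in $Q_i$, so $\det(\varepsilon_{Q_i}(L))$ vanishes to order at least $2$ at $Q_i$, giving at least $2s>k$ zeros — or, more simply, injectivity follows at once from your distance bound, since any nonzero $L$ in the domain yields a codeword of sum-rank weight at least $2s-k\ge1$. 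With that correction the argument is complete and coincides with the paper's.
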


\subsection{Examples from Elliptic Function Fields}
An elliptic function field $\mathcal{E}/\mathbb{F}_q$ is a quadratic extension of the rational function field $\mathbb{F}_q(x)/\mathbb{F}_q$ and has genus 1. It serves as a canonical example and suffices to illustrate our framework.

In this section, we consider the case when $q$ is odd. Specifically, let $\mathcal{E}/\mathbb{F}_q=\mathbb{F}_q(x,y)$ where $x,y$ are transcendental over $\mathbb{F}_q$ and satisfy the smooth, irreducible equation:
\[
	y^2=f_{\mathcal{E}}
\]
where $f_{\mathcal{E}}\in\mathbb{F}_q[x]$ is a square-free cubic polynomial. The Galois group $\text{Aut}(\mathcal{E}/\mathbb{F}_q(x))=\{Id,\sigma\}$ is cyclic of order 2, where
\[
	\sigma(y)=-y.
\]
Let $P_{\infty}$ be the place at infinity of $\mathbb{F}_q(x)$, which is totally ramified in $\mathcal{E}/\mathbb{F}_q(x)$. Suppose that $Q_{\infty}$ is the place of $\mathcal{E}$ lying over $P_{\infty}$. The basis $\{1,y\}$ of $\mathcal{E}/\mathbb{F}_q(x)$ forms an integral basis for all places of $\mathcal{E}/\mathbb{F}_q$ except $P_{\infty}$. Consequently, for any rational place $P\neq P_{\infty}$, the matrix induced by $\sigma$ on $\mathcal{O}'_P$ over $\mathcal{O}_P$ is
\begin{equation}\nonumber
\begin{bmatrix}
1,&0\\
0,&-1
\end{bmatrix}.
\end{equation}
Let $f_1$ and $f_2$ be two functions over $\mathcal{E}/\mathbb{F}_q$, where each $f_i$ has the form $f_i=f_{i1}+f_{i2}y$ with $f_{i1}$ and $f_{i2}$ belonging to $\mathbb{F}_q(x)$ for $i=1,2$. For any rational place $P\neq P_{\infty}$, the matrix induced by $f_i$ on $\mathcal{O}'_P$ over $\mathcal{O}_P$ is
\begin{equation}\nonumber
\begin{bmatrix}
	f_{i1},&f_{i2}\\
	f_{i2}f_{\mathcal{E}},&f_{i1}
\end{bmatrix}.
\end{equation}
Let $L=f_1+f_2\sigma$. Then we obtain
\begin{equation}\nonumber
\varepsilon_{P}(L)=\begin{bmatrix}
		f_{11}+f_{21},&f_{12}+f_{22}\\
		(f_{12}-f_{22})f_{\mathcal{E}},&f_{11}-f_{21}
	\end{bmatrix}
\end{equation}
and
\[
	\det(\varepsilon_{P}(L))=f_{11}^2-f_{21}^2-(f_{12}^2-f_{22}^2)f_{\mathcal{E}}.
\]
For positive integers $k$ and $k_1$, define sum-rank codes $C_{sr,1}$ and $C_{sr,2}$ as follows:
\[
	C_{sr,1}:=\{(\bar{\varepsilon}_{Q_1}(L),\ldots,\bar{\varepsilon}_{Q_s}(L))|L=f_1+f_2\sigma,f_1\in\mathcal{L}^*_{k_1}(kQ_{\infty}),f_2\in\mathcal{L}(k_1Q_{\infty})\}
\]
and
\[
	C_{sr,2}:=\{(\bar{\varepsilon}_{Q_1}(L),\ldots,\bar{\varepsilon}_{Q_s}(L))|L=f_1+f_2\sigma,f_1\in\mathcal{L}^*_{k_1}(kQ_{0,1}),f_2\in\mathcal{L}(k_1Q_{0,2})\}
\]
where $Q_{0,1}$ and $Q_{0,2}$ are rational places lying over $Q_0$. By applying the Singleton bound together with Theorems \ref{firstcase} and \ref{secondcase}, we obtain
\[
	4s-2k\le 2d_{sr,1},2d_{sr,2}\le 4s-k+2,
\]
where $s\le q$. The following two examples are given by SageMath \cite{sage}.
\begin{example}
Let $q=7$ and consider the elliptic curve $\mathcal{E}: y^2=x^3+3$ defined over $\mathbb{F}_q$. The rational places of $\mathbb{F}_q(x)/\mathbb{F}_q$ are
\[
	\{P_{0}, P_{1},P_{2},P_{3},P_{4},P_{5},P_{6}\}
\]
where $P_{\alpha}$ corresponds to the prime element $x-\alpha$. For $k=6$ and $k_1=3$, the space $\mathcal{L}(6Q_{\infty})$ has basis $\{1,x,y,x^2,xy,x^3\}$. The code $C_{sr,1}(D,6Q_{\infty})$ has parameters $[28,6,8]$. Taking $f_1=3$ and $f_2=y$, we obtain
\begin{equation}\nonumber
	\varepsilon_{P}(L)=\begin{bmatrix}
		3,&1\\
		-(x^3+3),&3
	\end{bmatrix}
\end{equation}
with
\[
\det(\varepsilon_{P}(L))=x^3+5.
\]
Since $x^3+5$ is irreducible over $\mathbb{F}_q[x]$, the function $\det(\varepsilon_{P}(L))$ has no rational zeros. Consequently, the codeword
\begin{equation}\nonumber
\left(\begin{bmatrix}
		3,&1\\
		-3,&3
	\end{bmatrix},\ldots,
	\begin{bmatrix}
	3,&1\\
	5,&3
	\end{bmatrix}\right)
\end{equation}
achieves the maximum sum-rank weight of $14=2s$. Alternatively, taking $f_1=1$ and $f_2=x^3$, we obtain
\begin{equation}\nonumber
	\varepsilon_{P}(L)=\begin{bmatrix}
		x^3+1,&0\\
		0,& 1-x^3
	\end{bmatrix}
\end{equation}
with
\[
\det(\varepsilon_{P}(L))=1-x^6.
\]
All places in $\{P_{1},P_{2},P_{3},P_{4},P_{5},P_{6}\}$ are zeros of $\det(\varepsilon_{P}(L))$. Thus, the codeword
\begin{equation}\nonumber
	\left(\begin{bmatrix}
		1,&0\\
		0,&1
	\end{bmatrix},\ldots,
	\begin{bmatrix}
		0,&0\\
		0,&2
	\end{bmatrix}\right)
\end{equation}
has a sum-rank weight that attains the lower bound.
\end{example}
\begin{example}
Consider the same finite field $\mathbb{F}_q$ and elliptic curve $\mathcal{E}$ as in the previous example. We choose the place $P_1$, and let $Q_{1,1}$ and $Q_{1,2}$ be the places of $\mathcal{E}/\mathbb{F}_q$ lying over $P_1$. We take the rational places 
\[
\{P_{0},P_{2},P_{3},P_{4},P_{5},P_{6}\}.
\]
Let $D'=P_{0}+P_{2}+P_{3}+P_{4}+P_{5}+P_{6}$. Then the code $C_{sr,2}(D,6P_1)$ has parameters $[24,6,6]$
with sum-rank weight distribution:
\begin{align*}
	&A_0=1,A_6=36,A_7=144,A_8=1542,A_9=7944,\\
	&A_{10}=26904,A_{11}=46959,A_{12}=34122
\end{align*}
where $A_i$ counts the number of codewords with sum-rank weight $i$.
\end{example}
\subsection{The Gilbert-Varshamov-like Bound}
In this section, we asymptotically compare our codes with the GV-like bound for sum-rank metric codes. Suppose that $\{F_i\}_{i\in\mathbb{N}}$ is a tower of function fields such that $F_{i+1}/F_i$ is a Galois extension of degree 2 for all $i\in\mathbb{N}$. Denote by $\{g_i\}_{i\in\mathbb{N}}$ and $\{N(F_i)\}_{i\in\mathbb{N}}$ the genus and the number of rational places of $F_i$, respectively. Following the construction from previous sections, with a sequence of integers $\{k_i\}_{i\in\mathbb{N}}$ such that $2g_i-1\le k_i\ge 2N(F_i)$, we obtain a sequence of sum-rank metric codes $\{C_i\}_{i\in\mathbb{N}}$ with parameters
\begin{itemize}
\item code length: $s_i\le4N(F_i)$,
\item dimension: $\ell_{k_i}=k_i-g_i+1$,
\item minimal sum-rank distance: $d_{sr,i}\ge 2s_i-k_i$.
\end{itemize}
Therefore, we have
\[
	d_{sr,i}+\ell_{k_i}\ge 2s_i-g_i+1.
\]
By the Hasse-Weil lower bound over $\mathbb{F}_q$, if either $N(F_i)$ or $g_i$ tends to infinity, then so does the other. 

Ihara \cite{Ihara} observed that the Hasse-Weil bound is no longer optimal when the genus of a function field is sufficiently large with respect to $q$. Let $A(q)$ be the Ihara's constant, defined as
\[
	A(q) := \limsup\limits_{g\to\infty}\frac{\max\{N(F):\text{$F$ is a function field of genus $g$ over $\mathbb{F}_q$}\}}{g}.
\]
The Drinfeld-Vl{\v a}du{\c t} bound \cite{Vladut} states that $A(q)\le \sqrt{q}-1$. If $q$ is a square, then Drinfeld and Vl{\v a}du{\c t} proved that equality holds. Tsfasman, Vl{\v a}du{\c t}, and Zink \cite{TVZ-bound} provided an example of curves that attain this bound employing modular curves. Garcia and Stichtenoth \cite{Garcia} presented a more explicit tower of function fields. Using their technology, we can present a chain of quadratic function field extensions that attain the bound for some even $q$.
\begin{proposition}\label{tower}
Let $q=2^{2^h}$ with $h\ge2$, and $F_0=\mathbb{F}_q(z_0)$ be the rational function field. Suppose that $t=2^{h-1}$ and $\{F_i\}_{i\in\mathbb{N}}$ is a tower of function fields such that $F_{i+1}=F_i(z_{i+1})$, where $z_i$ satisfies the equations:
\begin{itemize}
\item for $j\ge0$, $z_{jt+1}^2+z_{jt+1}=x_{j}^{2^t+1}$, where $x_0=z_0$, $x_1=z_t$ and $x_j=\frac{z_{jt}}{x_{j-1}}$ for $j\ge2$,
\item for $j\ge0$ and $2\le i\le t$, $z_{jt+i}^2+z_{jt+i}=z_{jt+i-1}$.
\end{itemize}
Then
\[
	\lim\limits_{i\to\infty}\frac{N_i}{g_i}=\sqrt{q}-1.
\]
\end{proposition}
\begin{proof}
It is sufficient to show that for any $j\ge0$, the function field $F_{jt}$ has the form $F_{(j+1)t}=F_{jt}(z_{(j+1)t})$, where $z_{(j+1)t}$ satisfies the equation
\[
	z_{(j+1)t}^{2^t}+z_{(j+1)t}=x_{j}^{2^t+1},
\]
with
\[
	x_0=z_0,\ x_1=z_t,\ x_j=\frac{z_{jt}}{x_{j-1}}\text{ for }j\ge2.
\]
Then, using \cite[Corollary 3.2]{Garcia}, we obtain
\[
	\lim\limits_{i\to\infty}\frac{N_i}{g_i}=\lim\limits_{j\to\infty}\frac{N_{jt}}{g_{jt}}=\sqrt{q}-1.
\]

For $j=0$, observe that
\begin{align*}
&z_t^2+z_t=z_{t-1},\\
&z_t^4+z_t^2=z_{t-1}^2,\\
&\cdots\\
&z_t^{2^t}+z_t^{2^{t-1}}=z_{t-1}^{2^{t-1}}.
\end{align*}
Recall that $t=2^{h-1}$, which implies
\begin{align*}
	z_t^{2^t}+z_t&=(z_{t-1}^{2^{t-1}}+z_{t-1}^{2^{t-2}})+\cdots+(z_{t-1}^{2}+z_{t-1})\\
	&=z_{t-2}^{2^{t-2}}+z_{t-2}^{2^{t-4}}+\cdots+z_{t-2}^4+z_{t-2}\\
	&=(z_{t-2}^{2^{t-2}}+z_{t-2}^{2^{t-3}})+(z_{t-2}^{2^{t-3}}+z_{t-2}^{2^{t-4}})+\cdots+(z_{t-2}^4+z_{t-2}^2)+(z_{t-2}^2+z_{t-2})\\
	&=z_{t-3}^{2^{t-3}}+z_{t-3}^{2^{t-4}}+\cdots+z_{t-3}^2+z_{t-3}\\
	&=z_{t-4}^{2^{t-4}}+z_{t-4}^{2^{t-6}}+\cdots+z_{t-4}^4+z_{t-4}\\
	&=z_{t-5}^{2^{t-5}}+z_{t-5}^{2^{t-6}}+\cdots+z_{t-5}^2+z_{t-5}\\
	&\qquad \vdots\\
	&=z_1^2+z_1\\
	&=x_0^{2^t+1}
\end{align*}
Then we have $F_{t}=F_{0}(z_{t})$ with 
\[
	z_t^{2^t}+z_t=x_0^{2^t+1}
\]
The proof can be completed with similar equations for any $j\ge1$.
\end{proof}

The above proposition shows that there exist towers of quadratic function field extensions over $\mathbb{F}_q$ attaining the Drinfeld-Vl{\v a}du{\c t} bound for some square $q$. For any sum-rank metric code $C_i\in\mathbb{F}_q^{2\times2}$ with $i\in \mathbb{N}$, its rate $\mathcal{R}_{i}$ and relative minimum distance $\delta_{sr,i}$ are defined as
\[
	\mathcal{R}_{i}:=\frac{\ell_{k_i}}{2s_i},\quad \delta_{sr,i}:=\frac{d_{sr,i}}{2s_i}.
\]
\begin{theorem}
Suppose that $q$ is a square and there exists a tower $\{F_i\}_{i\in\mathbb{N}}$ of quadratic function field extensions over $\mathbb{F}_q$ attaining the Drinfeld-Vl{\v a}du{\c t} bound. For any real numbers $\mathcal{R}_{sr},\delta_{sr}\in(0,1)$ satisfying
\begin{align}\label{agbound}
	\mathcal{R}_{sr}<1-\delta_{sr}-\frac{1}{2(\sqrt{q}-1)}.
\end{align}
There exists a sum-rank metric code with rate at least $\mathcal{R}_{sr}$ and relative minimum distance at least $\delta_{sr}$.
\end{theorem}
\begin{proof}
We only need to consider the sequence of sum-rank metric codes $\{C_i\}_{i\in\mathbb{N}}$ constructed from the tower $\{F_i\}_{i\in\mathbb{N}}$ via Theorem \ref{firstcase} or \ref{secondcase}. For any $i\in\mathbb{N}$, the parameters of $C_i$ are 
\[
	[s\le4N(F_i),\ell_{k_i},d_{sr,i}\ge 2s_i-k_i].
\]
Denote the rate $\mathcal{R}_i$ and relative minimum distance $\delta_{sr,i}$ as above. Choose $k_i$ such that $\mathcal{R}_i\ge \mathcal{R}_{sr}$ for all $i$ and $\lim\limits_{i\to\infty}\mathcal{R}_i=\mathcal{R}_{sr}$. Then we have
\[
	\mathcal{R}_{sr}=\lim\limits_{i\to\infty}\mathcal{R}_i\ge1-\lim\limits_{i\to\infty}\delta_{sr,i}-\frac{1}{2(\sqrt{q}-1)}.
\]
Thus for sufficiently large $i$, the code $C_i$ satisfies the desired parameters.
\end{proof}

Note that the GV-like bound for $n=m=2$ has the form

\begin{align}\label{gvbound}
\mathcal{R}_{sr}<&(\delta_{sr}-1)^2-\frac{\delta_{sr}}{2}\log_q(1+\frac{1}{2\delta_{sr}})-\frac{\log_q(1+2\delta_{sr})}{4}\nonumber\\
				&-\frac{\log_q(\gamma_q)}{4}+o(1).
\end{align}
For certain values of $q$ such that there exists a tower of quadratic function field extensions attaining the Drinfeld-Vl{\v a}du{\c t} bound, some families of our sum-rank metric codes surpass the GV-like bound. Using the function field tower given in Proposition \ref{tower}, we plot the comparisons between equations \eqref{agbound} with \eqref{gvbound} for $q=2^4$, $q=2^8$ and $q=2^{16}$ in Figs. \ref{16}, \ref{256}, and \ref{65536}, respectively.
\begin{figure}
	\centering
	\includegraphics[scale=0.4]{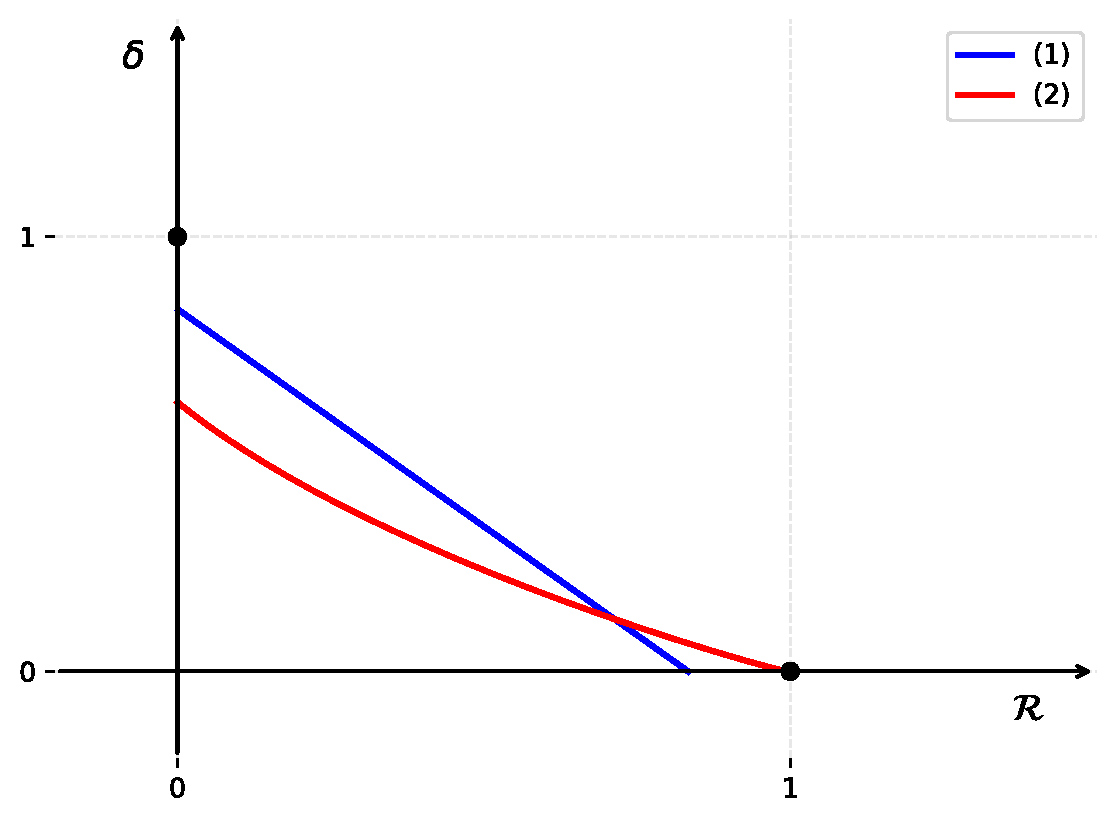}
	\caption{\text{$q=2^{4}$}}
	\label{16}
\end{figure}
\begin{figure}
	\centering
	\includegraphics[scale=0.4]{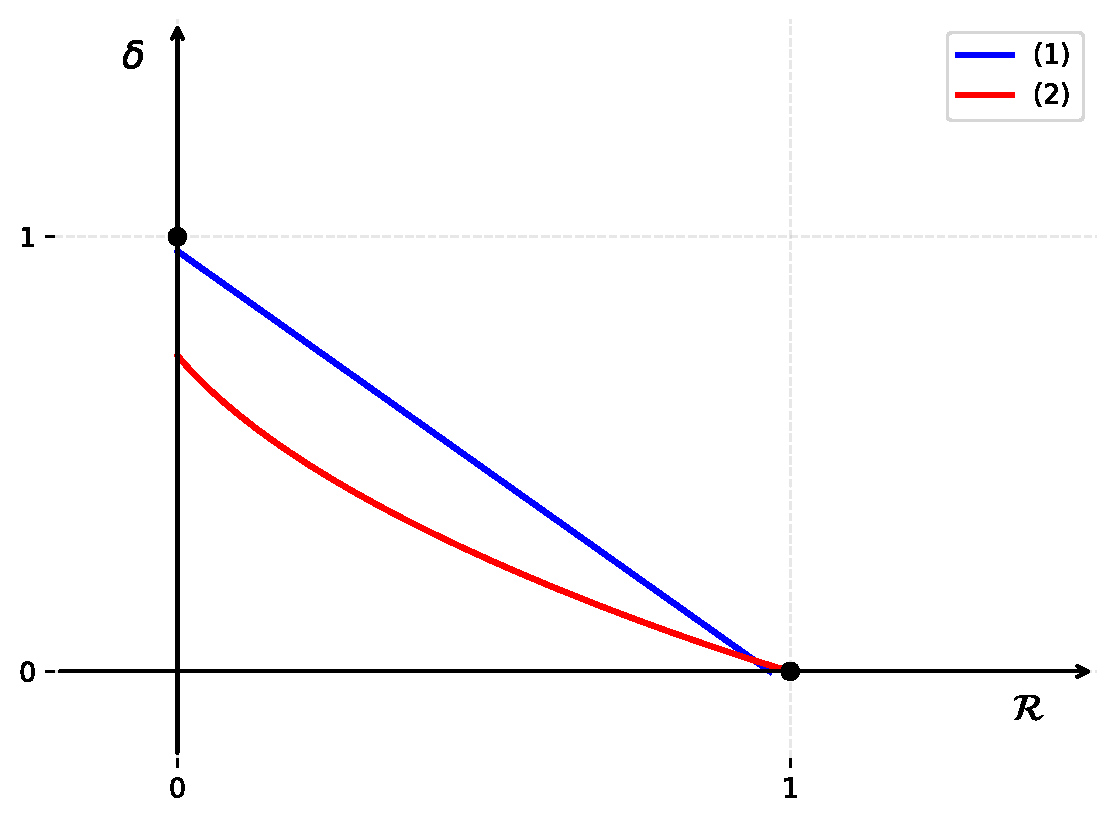}
	\caption{\text{$q=2^{8}$}}
	\label{256}
\end{figure}
\begin{figure}
	\centering
	\includegraphics[scale=0.4]{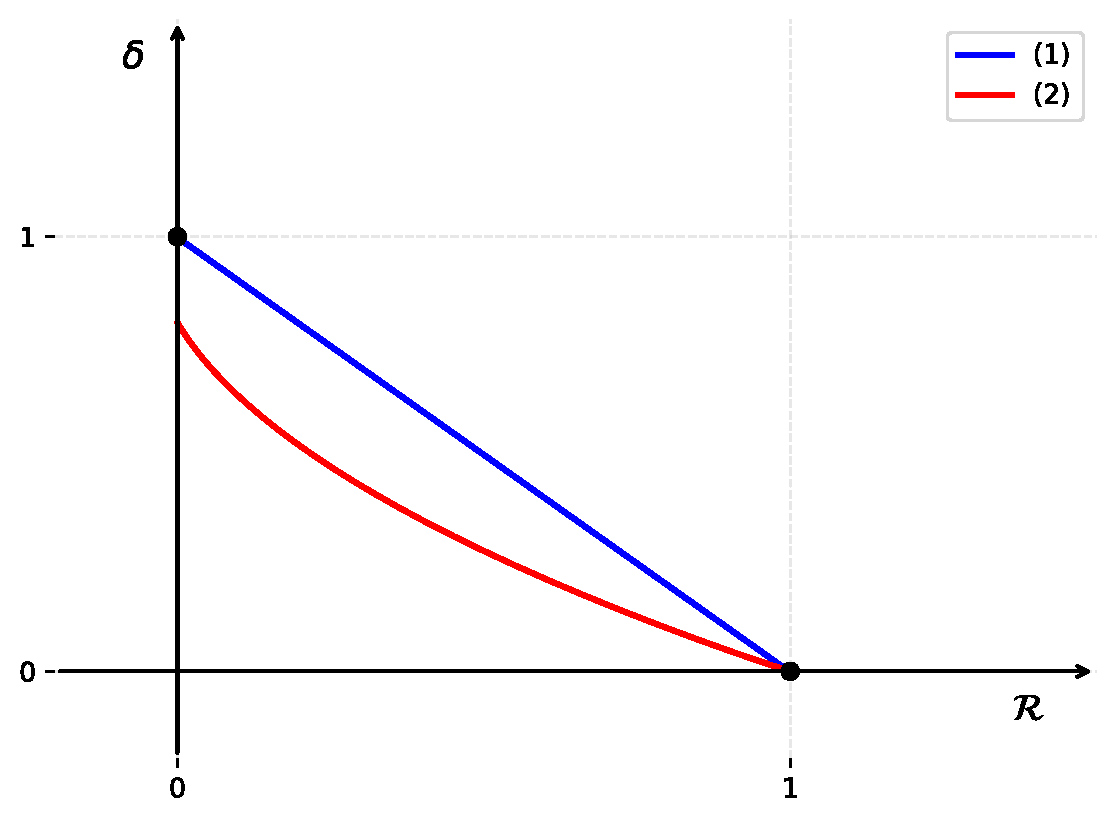}
	\caption{\text{$q=2^{16}$}}
	\label{65536}
\end{figure}

\section{Conclusion}
In this paper, we presented two constructions of $2\times2$ sum-rank metric codes via quadratic Galois extensions of algebraic function fields. By expressing the functions of extension fields to matrices over the base fields, we determined the parameters of these codes. Furthermore, we provided illustrative examples based on elliptic function fields.

To the best of our knowledge, these constructions are new. The only existing result on AG codes in the sum-rank metric was given by E. Berardini and X. Caruso \cite{Berardini}. In contrast to their work, our approach relies more heavily on the structures of function field extensions. In this paper, we established that $\det(\varepsilon_{P}(L))\neq0$ in the special case where $f_1=0$. The primary challenge in extending these constructions to $n\times n$ matrices lies in guaranteeing $\det(\varepsilon_{P}(L))\neq0$. Through SageMath computations, we verify that $L(g)$ has trivial kernel for certain special function fields and places when $n\ge3$. Consequently, in our future work, we will investigate more general constructions.

Regarding the sum-rank distance of our AG codes in the sum-rank metric, for a fixed dimension $k-g(F)+1$, we have
\[
	\max d_{sr}-\min d_{sr}\approx \frac{k+g(F)+1}{2}.
\]
This raises an interesting problem: adapting the techniques used in constructing MDS AG codes to the sum-rank metric to obtain MSRD AG codes. Building on the approach for AG codes with narrow bounds (such as elliptic codes \cite{mdselliptic}), we identify this problem as an important objective for our subsequent research.

\section{Acknowledgment}
This work is supported by Natural Science Foundation of Guangdong Province (No. 2025A1515011764), the National Natural Science Foundation of China (No. 12441107) and Guangdong Provincial Key Laboratory of Information Security Technology (No. 2023B1212060026).


\begin{thebibliography}{1}
\bibliographystyle{IEEEtran}
\bibitem{Abiad} A. Abiad, A. P. Khramova, and A. Ravagnani, ``Eigenvalue Bounds for Sum-Rank-Metric Codes," {\it IEEE Trans. Inf. Theory}, vol. 70, no. 7, pp. 4843-4855, July 2024.

\bibitem{Abida-bound} A. Abiad, A. L. Gavrilyuk, A. P. Khramova, and I. Ponomarenko, ``A Linear Programming Bound for Sum-Rank Metric Codes," in IEEE Transactions on Information Theory, vol. 71, no. 1, pp. 317-329, Jan. 2025.

\bibitem{Bartz} H. Bartz, T. Jerkovits, S. Puchinger and J. Rosenkilde, ``Fast Decoding of Codes in the Rank, Subspace, and Sum-Rank Metric," {\it IEEE Trans. Inf. Theory}, vol. 67, no. 8, pp. 5026-5050, Aug. 2021.

\bibitem{Berardini} E. Berardini and X. Caruso, ``Algebraic Geometry Codes in the Sum-Rank Metric," {\it IEEE Trans. Inf. Theory}, vol. 70, no. 5, pp. 3345-3356, May 2024.

\bibitem{Byrne} E. Byrne, H. Gluesing-Luerssen, and A. Ravagnani, ``Fundamental properties of sum-rank-metric codes," {\it IEEE Trans. Inf. Theory}, vol. 67, no. 10, pp. 6456-6475, Oct. 2021.

\bibitem{Byrne-anticode} E. Byrne, H. Gluesing-Luerssen, and A. Ravagnani, ``Anticodes in the sum-rank metric," {\it Linear Algebra Appl.}, vol. 643, pp. 80-98, Jun. 2022.

\bibitem{Moreno} E. Camps-Moreno, E. Gorla, C. Landolina, E. L. Garcia, U. Mart{\' i}nez-Pe{\~ n}as, and F. Salizzoni, ``Optimal anticodes, MSRD codes, and generalized weights in the sum-rank metric," {\it IEEE Trans. Inf. Theory}, vol. 68, no. 6, pp. 3806-3822, Jun. 2022.

\bibitem{Caruso} X. Caruso, A. Durand, ``Duals of linearized Reed-Solomon codes," {\it Des. Codes Cryptogr.}, vol. 91, pp. 241-271, Sep. 2023.

\bibitem{Chen-explicit} H. Chen, ``New Explicit Good Linear Sum-Rank-Metric Codes," {\it IEEE Trans. Inf. Theory}, vol. 69, no. 10, pp. 6303-6313, Oct. 2023.

\bibitem{Chen-decoding} H. Chen, Y. Qi, and Z. Cheng, ``Construction and Fast Decoding of Binary Linear Sum-Rank-Metric Codes", 2023, {\it arXiv:2311.03619}

\bibitem{mdselliptic} H. Chen, ``Many non-Reed-Solomon type MDS codes from arbitrary genus algebraic curves," {\it IEEE Trans. Inf. Theory}, Dec. 2023.

\bibitem{Delsarte} P. Delsarte, ``Bilinear forms over a finite field, with applications to coding theory," {\it J. Combinat. Theory A}, vol. 25, no. 3, pp. 226-241, Nov. 1978.

\bibitem{Gabidulin} E. M. Gabidulin, ``Theory of codes with maximum rank distance," {\it Problems Inf. Transmiss.}, vol. 21, no. 1, pp. 1-21, 1985.

\bibitem{Garcia} A. Garcia, H. Stichtenoth, ``A tower of Artin-Schreier extensions of function fields attaining the Drinfeld-S. G. Vl{\v a}du{\c t} bound," {\it Invent Math}, vol. 121, pp. 211-222, 1995. 

\bibitem{Guo} Z. Guo, C. Xing, C. Yuan, and Z. Zhang, ``Random Gabidulin Codes Achieve List Decoding Capacity in the Rank Metric," in {\it 2024 IEEE 65th Annual Symposium on Foundations of Computer Science (FOCS)}, Nov. 2024, pp. 1846-1873.

\bibitem{Goppa} V. D. Goppa, ``Algebraico-geometric codes," {\it Math. USSR-Izvestiya}, vol. 21, no. 1, pp. 75-91, 1983.

\bibitem{Horman} F. H{\" o}rmann, H. Bartz, and S. Puchinger, ``Error-Erasure Decoding of Linearized Reed-Solomon Codes in the Sum-Rank Metric," in {\it Proc. IEEE Int. Symp. Inf.}, Aug. 2022, pp. 7-12.

\bibitem{Horman-skew} F. H{\" o}rmann, and H. Bartz, ``Interpolation-based decoding of folded variants of linearized and skew Reed-Solomon codes", {\it Des., Codes Cryptogr.}, vol.92, no. 3, pp. 553-586, Mar. 2024.

\bibitem{Huffman} W. C. Huffman, and V. Pless, {\it Fundamentals of Error-Correcting Codes}, Cambridge University Press, Cambridge, 2003.

\bibitem{Ihara} Y. Ihara, ``Some remarks on the number of rational points of algebraic curves over finite fields," {\it J. Fac. Sci., Univ. Tokyo. Sect. A, Math.}, vol. 28, no. 3, pp. 721–724, 1982.

\bibitem{Islam} H. Islamm and A. L. Horlemann, ``Galois Hull Dimensions of Gabidulin Codes," in {\it 2023 IEEE Information Theory Workshop (ITW)}, Jun. 2023, pp. 42-46.

\bibitem{Jerkovits} T. Jerkovits, H. Bartz, and A. Wachter-Zeh, ``Randomized Decoding of Linearized Reed-Solomon Codes Beyond the Unique Decoding Radius," in {\it Proc. IEEE Int. Symp. Inf. Theory (ISIT)}, Aug. 2023, pp. 820-825.

\bibitem{Li} R. Li, and F. Fu, ``On the New Rank Metric Codes Related to Gabidulin Codes," in {\it 2024 IEEE Information Theory Workshop (ITW)}, Dec. 2024, pp. 597-602.

\bibitem{Liu-multishot} H. Liu, H. Wei, A. Wachter-Zeh, and M. Schwartz, ``Linearized Reed-Solomon Codes With Support-Constrained Generator Matrix and Applications in Multi-Source Network Coding," {\it IEEE Trans. Inf. Theory}, vol. 71, no. 2, pp. 895-913, Feb. 2025.

\bibitem{Penas-skew} U. Mart{\' i}nez-Pe{\~ n}as, ``Skew and linearized Reed-Solomon codes and maximum sum rank distance codes over any division ring," {\it J. Algebra}, vol. 504, pp. 587-612, Jun. 2018.

\bibitem{Penas-multishot} U. Mart{\' i}nez-Pe{\~ n}as, and F. R. Kschischang, ``Reliable and secure multishot network coding using linearized Reed-Solomon codes," {\it IEEE Trans. Inf. Theory}, vol. 65, no. 8, pp. 4785-4803, Aug. 2019.

\bibitem{Penas-mr} U. Mart{\' i}nez-Pe{\~ n}as, and F. R. Kschischang, ``Universal and dynamic locally repairable codes with maximally recoverablity via sum-rank codes," {\it IEEE Trans. Inf. Theory}, vol. 65, no. 12, pp. 7790-7805, Dec. 2019.

\bibitem{Penas-bch} U. Mart{\' i}nez-Pe{\~ n}as, ``Sum-rank BCH codes and cyclic-skew-cyclic codes," {\it IEEE Trans. Inf. Theory}, vol. 67, no. 8, pp. 5149-5167, Aug. 2021.

\bibitem{Penas-book} U. Mart{\' i}nez-Pe{\~ n}as, M. Shehadeh, and F. R. Kschischang, ``Codes in the sum-rank metric: Fundamentals and applications," {\it Found. Trends Commun. Inf. Theory}, vol. 19, no. 5, pp. 814-1031, 2022.

\bibitem{Penas-pmds} U. Mart{\' i}nez-Pe{\~ n}as, ``A general family of MSRD codes and PMDS codes with smaller field sizes from extended Moore matrices," {\it SIAM J. Discrete Math.}, vol. 36, no. 3, pp. 1868-1886, Sep. 2022.

\bibitem{Penas-dtextend} U. Mart{\' i}nez-Pe{\~ n}as, ``Doubly and triply extended MSRD codes," {\it Finite Field Appl.} vol. 91, 2023.

\bibitem{Penas-msrd} U. Mart{\' i}nez-Pe{\~ n}as, ``New constructions of MSRD codes," {\it Comp. Appl. Math}, vol. 43, no. 398, Sep. 2024.

\bibitem{Napp-multishot} D. Napp, R. Pinto, and V. Sidorenko, ``Concatenation of convolutional codes and rank metric codes for multi-shot network coding," {\it Des., Codes Cryptogr.}, vol. 86, no. 2, pp. 303-318, Feb. 2018.

\bibitem{Neri-Gabidulin} A. Neri, A.-L. Horlemann-Trautmann, T. Randrianarisoa, and J. Rosenthal, ``On the genericity of maiximum rank distance and Gabidulin codes,” {\it Des., Codes Cryptogr.}, vol. 86, no. 2, pp. 319-340, Apr. 2018.

\bibitem{Neri-twisted} A. Neri, ``Twisted linearized Reed-Solomon codes: A skew polynomial framework," {\it J. Algebra}, vol. 609, pp. 792-839, Nov. 2022.

\bibitem{Neri-oneweight} A. Neri, P. Santonastaso and F. Zullo, ``The geometry of one-weight codes in the sum-rank metric," {\it J. Combinat. Theory A}, vol. 194, 2023.

\bibitem{Nobrega-multishot} R. W. Nobrega and B. F. Uchoa-Filho, ``Multishot codes for network coding using rank-metric codes," in {\it Proc. 3rd IEEE Int. Workshop Wireless Netw. Coding}, Jun. 2010, pp. 1-14.

\bibitem{Ore} O. Ore, ``Theory of non-commutative polynomials," {\it Ann. Math.}, vol. 34, no. 3, p. 480, Jul. 1933.

\bibitem{Ott-bound} C. Ott, S. Puchinger, and M. Bossert, ``Bounds and genericity of sumrank-metric codes," in {\it Proc. XVII Int. Symp. Problems Redundancy Inf. Control Systems (REDUNDANCY)}, Oct. 2021, pp. 119-124.

\bibitem{Punchinger-bound} S. Puchinger and J. Rosenkilde, ``Bounds on List Decoding of Linearized Reed-Solomon Codes," in {\it Proc. IEEE Int. Symp. Inf. Theory (ISIT)}, Sep. 2021, pp. 154-159.

\bibitem{Puchinger} S. Puchinger, J. Renner, and J. Rosenkilde, ``Generic Decoding in the Sum-Rank Metric," {\it IEEE Trans. Inf. Theory}, vol. 68, no. 8, pp. 5075-5097, Aug. 2022.

\bibitem{Shehadeh} M. Shehadeh and F. R. Kschischang, ``Rate-diversity optimal multiblock space-time codes via sum-rank codes," in {\it Proc. IEEE Int. Symp. Inf. Theory (ISIT)}, Jun. 2020, pp. 3055-3060.

\bibitem{Stich} H. Stichtenoth, {\it Algebraic Function Fields and Codes} (Graduate Texts in Mathematics), vol. 254. Berlin, Germany: Springer-Verlag, 2009.

\bibitem{sage} SageMath, the Sage Mathematics Software System (Version 9.8), Sage Developers, Newcastle Upon Tyne, U.K., 2023. [Online]. Available: https://www.sagemath.org

\bibitem{TVZ-bound} M. A. Tsfasman, S. G. Vl{\v a}du{\c t}, and T. Zink, ``On Goppa codes which are better than the Varshamov–Gilbert bound,” {\it Math. Nachr}, vol. 109, pp. 21–28, Jan. 1982.

\bibitem{Tsfasman} M. A. Tsfasman and S. G. Vl{\v a}du{\c t}, {\it Algebraic-Geometric Codes}, Norwell, MA, USA: Kluwer Academic, 1991.

\bibitem{Vladut} S. G. Vl{\v a}du{\c t} and V. G. Drinfel’d, ``Number of points of an algebraic curve," {\it Funct. Anal. Appl.}, vol. 17, no. 1, pp. 53–54, Jan. 1983.

\end{thebibliography}
\end{document}